\newtheorem{definition}{Definition}
\newtheorem{lemma}{Lemma}
\newtheorem{example}{Example}
\newcommand{\nop}[1]{}
\newtheorem*{proof}{Proof:}
\def\BibTeX{{\rm B\kern-.05em{\sc i\kern-.025em b}\kern-.08em
    T\kern-.1667em\lower.7ex\hbox{E}\kern-.125emX}}
\begin{document}

\makeatletter
\newcommand{\linebreakand}{%
  \end{@IEEEauthorhalign}
  \hfill\mbox{}\par
  \mbox{}\hfill\begin{@IEEEauthorhalign}
}
\makeatother

\title{Reverse Influential Community Search Over Social Networks (Technical Report)}

\author{
\IEEEauthorblockN{Qi Wen}
\IEEEauthorblockA{\textit{Software Engineering Institute} \\
\textit{East China Normal University}\\
Shanghai, China \\
51265902057@stu.ecnu.edu.cn}
\and
\IEEEauthorblockN{Nan Zhang}
\IEEEauthorblockA{\textit{Software Engineering Institute} \\
\textit{East China Normal University}\\
Shanghai, China \\ 
51255902058@stu.ecnu.edu.cn}
\and
\IEEEauthorblockN{Yutong Ye}
\IEEEauthorblockA{\textit{Software Engineering Institute} \\
\textit{East China Normal University}\\
Shanghai, China \\ 
52205902007@stu.ecnu.edu.cn}
\linebreakand
\IEEEauthorblockN{Xiang Lian}
\IEEEauthorblockA{\textit{Department of Computer Science} \\
\textit{Kent State University}\\
Kent, OH 44242, USA \\
xlian@kent.edu}
\and
\IEEEauthorblockN{Mingsong Chen}
\IEEEauthorblockA{\textit{Software Engineering Institute} \\
\textit{East China Normal University}\\
Shanghai, China \\ 
mschen@sei.ecnu.edu.cn}
}

\maketitle

\begin{abstract}
As an important fundamental task of numerous real-world applications such as social network analysis and online advertising/marketing, several prior works studied influential community search, which retrieves a community with high structural cohesiveness and maximum influences on other users in social networks. However, previous works usually considered the influences of the community on \textit{arbitrary} users in social networks, rather than specific groups (e.g., customer groups, or senior communities). Inspired by this, we propose a novel \textit{\textbf{Top-M R}everse \textbf{I}nfluential \textbf{C}ommunity \textbf{S}earch} (Top\textit{M}-RICS) problem, which obtains a \textit{seed community} with the maximum influence on a user-specified \textit{target community}, satisfying both structural and keyword constraints. To efficiently tackle the Top\textit{M}-RICS problem, we design effective pruning strategies to filter out false alarms of candidate seed communities, and propose an effective index mechanism to facilitate the community retrieval. 
We also formulate and tackle a Top\textit{M}-RICS variant, named \textit{\textbf{Top-M R}elaxed \textbf{R}everse \textbf{I}nfluential \textbf{C}ommunity \textbf{S}earch} (Top\textit{M}-R$^2$ICS), 
which returns top-\textit{M} subgraphs with relaxed structural constraints and having the maximum influence on a user-specified target community.
Comprehensive experiments have been conducted to verify the efficiency and effectiveness of our Top\textit{M}-RICS and Top\textit{M}-R$^2$ICS approaches on both real-world and synthetic social networks under various parameter settings.
\end{abstract}

\begin{IEEEkeywords}
Reverse Influential Community Search, Social Networks, Influence Maximization
\end{IEEEkeywords}

\section{Introduction}
For the past decades, the \textit{community search} has attracted much attention in various real-world applications such as online advertising/marketing \cite{tu2022viral,ebrahimi2022social,rai2023top, MolaeiFB24}, social network analysis \cite{kumar2022influence,subramani2023gradient,li2015influential, YanLWDW20}, and many others. Prior works on the community search \cite{zhou2023influential,islam2022keyword,wu2021efficient,al2020topic} usually retrieved a community (subgraph) of users from social networks with high structural and/or spatial cohesiveness. Several existing works \cite{wu2021efficient,li2022itc,xu2020personalized} considered the influences of communities and studied the problem of finding communities with high influences on other users in social networks.

In this paper, we propose a novel problem, named \textit{Top-M Reverse Influential Community Search} (Top\textit{M}-RICS) over social networks, which obtains \textit{M} communities (w.r.t. specific interests such as sports, food, etc.) that have high structural cohesiveness and the highest influences on a targeted group (community) of users (instead of arbitrary users in social networks). The resulting Top\textit{M}-RICS communities are useful for various real applications such as online advertising/marketing in social media \cite{fang2014topic} and disease spread prevention in contact networks \cite{firestone2011importance}. Below, we give motivation examples of our Top\textit{M}-RICS problem.

\begin{figure}[t!]\vspace{-3ex}
    \centering
    \includegraphics[width=0.5\textwidth]{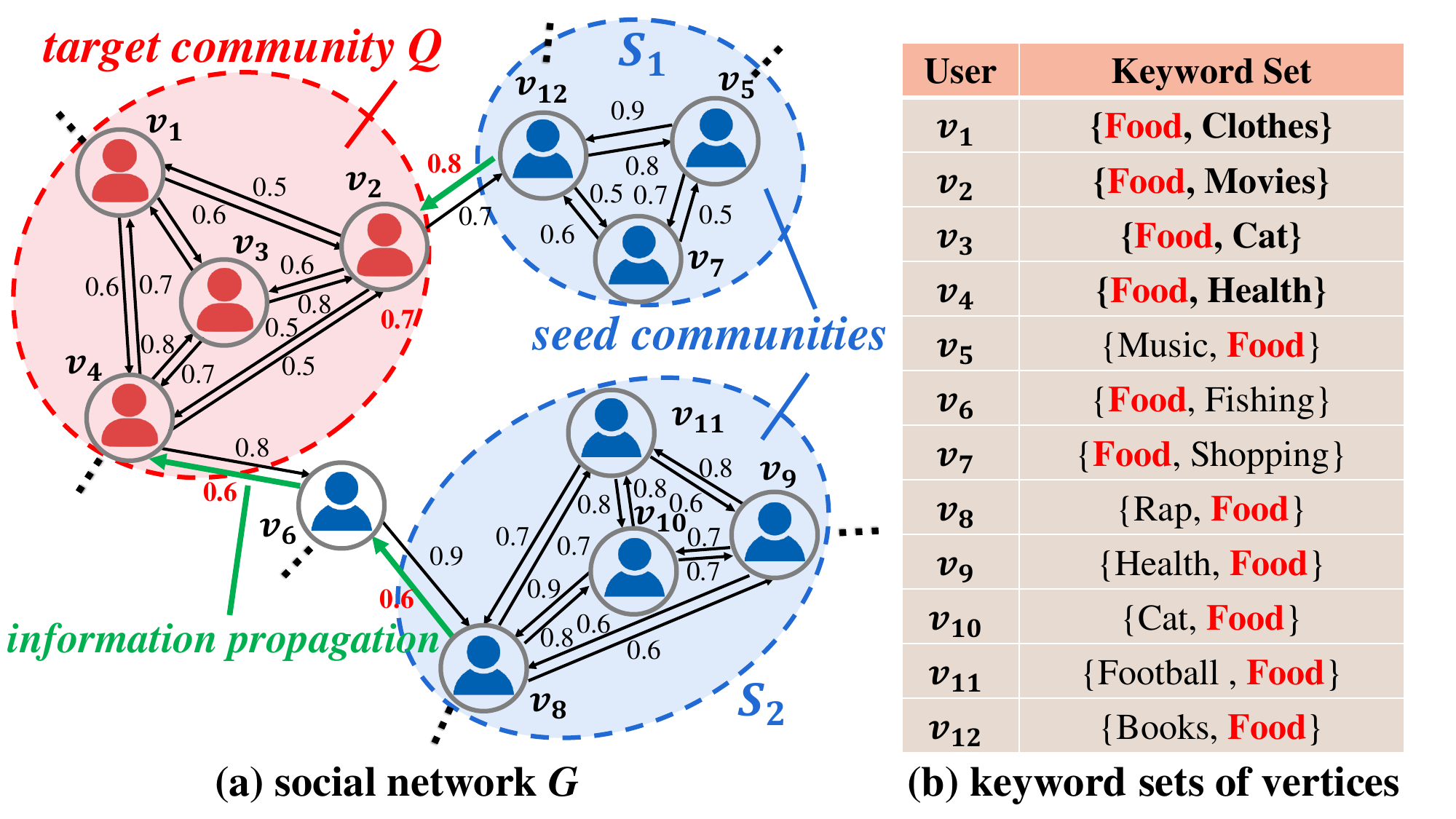}\vspace{-2ex}
    \caption{\label{f1:topk} An example of the Top$M$-RICS problem over social network $G$ ($M=2$, query keyword set $L_q=\{Food\}$). }
    \vspace{-3ex}
\end{figure}

\begin{example}
    \label{example-online}
    \textbf{\textit{(Online Advertising and Marketing Over Social Networks)}} In social networks (e.g., Twitter, Groupon~\cite{wang2017game}), a sales manager wants to ensure the optimal advertisement dissemination of some products (e.g., food) to a targeted group of users through social networks. Figure~\ref{f1:topk}(a) shows an example of social network $G$, where each user vertex $v_i$ ($1\leq i\leq 12$) is associated with a set of keywords (indicating the user's interests, as depicted in Fig.~\ref{f1:topk}(b)). For example, user $v_3$ is interested in delicious food and cute cats. However, direct advertising to the targeted users may sometimes have negative effects, since they might be bored and haunted by repeated promotion advertisements (without even seeing actual merits of the product). In this scenario, the sales manager can specify a group of targeted customers (e.g., $v_1\sim v_4$) for online advertising and marketing (forming a target community $Q$), and issue a Top\textit{M}-RICS query to identify $M$ communities of users (e.g., $S_1$ and $S_2$ in Figure~\ref{f1:topk}(a), where $M=2$) who have the highest impact on the targeted customers in $Q$ (e.g., via tweets/retweets in Twitter). Users in the returned communities $S_1$ and $S_2$ will be given coupons or discounts to promote the products on social networks, and most importantly, indirectly affect the targeted customers' purchase decisions. \qquad $\blacksquare$
\end{example}

\begin{example}
    \textcolor{black}{\textbf{\textit{(Disease Spread Prevention via Contact Networks)}} In real application of infectious disease prevention, there exists some community of vulnerable people (e.g., senior/minor people) who are either reluctant or unable to take preventive actions such as vaccines, due to religion, age, and/or health reasons. The health department may want to identify a few groups of people (e.g., relatives, or colleagues) through contact networks \cite{firestone2011importance} who are most likely to spread infectious diseases to such a vulnerable community, and persuade them to use preventive means (e.g., COVID-19 vaccine). In this case, the health department can exactly perform a Top\textit{M}-RICS query to obtain $M$ communities of people who have the highest disease spreading possibilities to the targeted vulnerable community $Q$.} \qquad $\blacksquare$
\end{example}

\nop{

\begin{example}
    {\color{blue} \textbf{\textit{(Political Campaign)}} In a political campaign, candidate Mike aims to maximize support from specific voters through social networks. Specifically, Mike hopes to identify the most influential communities for targeted advertising and information dissemination to gain more voter support. Most users share common interests such as environmental protection and food safety and are particularly focused on a specific user group. More importantly, such user groups, as seed communities, can maximize the reach of information dissemination to attract the target audience. In this scenario, Mike needs to use the RICS algorithm to identify the most influential seed communities that can impact the target community on the social network.} \qquad $\blacksquare$
\end{example}

}

The Top\textit{M}-RICS problem has many other real applications, such as finding $M$ groups of researchers with the highest influences on another target research community in bibliographical networks.

Inspired by the examples above, in this paper, we consider the Top\textit{M}-RICS problem, which obtains \textit{M} communities (called \textit{seed communities}) that contain query keywords (e.g., food and clothing) and have the highest influences on a target user group. The resulting Top\textit{M}-RICS communities contain highly influential users to whom we can promote products for online advertising/marketing, or suggest taking vaccines for protecting vulnerable people in contact networks.

Note that, efficient and effective answering of Top\textit{M}-RICS queries is quite challenging. A straightforward method is to enumerate all possible communities (subgraphs), compute the influence score of each community with respect to the target user group, and return \textit{M} communities with the highest influence scores. However, this approach is not feasible in practice, due to the large number of candidate communities.

To the best of our knowledge, previous works have not considered the influences on a target user group. Therefore, previous techniques cannot work directly on our Top\textit{M}-RICS problem. To address the challenges of our Top\textit{M}-RICS problem, we propose a two-stage Top\textit{M}-RICS query processing framework in this paper, including offline pre-computation and online Top\textit{M}-RICS querying. In particular, we propose effective pruning strategies (w.r.t., query keyword, boundary support, and influence score) to safely filter out invalid candidate seed communities and reduce the Top\textit{M}-RICS problem search space. Moreover, we design an effective indexing mechanism to integrate our pruning methods seamlessly and develop an efficient algorithm for Top\textit{M}-RICS query processing.

Furthermore, we consider a variant of Top\textit{M}-RICS, named \textit{Top-$M$ Relaxed Reverse Influential Community Search} (Top\textit{M}-R$^2$ICS), which retrieves $M$ subgraphs that have the highest influences on the target user group. Different from Top\textit{M}-RICS, the returned $M$ Top\textit{M}-R$^2$ICS subgraphs have the relaxed structural constraints and the highest influence scores. In Example~\ref{example-online}, since not all members of communities may highly influence the targeted user group $Q$, we thus relax the structural constraints (e.g., $k$-truss \cite{cohen2008trusses} and radius constraint) by returning $M$ subgraph answers that are less structurally cohesive. As an example in Figure~\ref{f1:topk}(a), for community $S_2$, if we replace $v_8$ with $v_6$ in $S_2$, then we can obtain a new (disconnected) subgraph (i.e., potential Top$M$-R$^2$ICS answer), yet with higher impact on target users in $Q$.


In this paper, we make the following major contributions.
\begin{enumerate}
    \item We formally define the \textit{top-M reverse influential community search} (Top\textit{M}-RICS) problem on social networks in Section \ref{def:Problem definition}.
    \item We design an efficient query processing framework for answering Top\textit{M}-RICS queries in Section \ref{RICS framework}.
    \item We propose effective pruning strategies to reduce the Top\textit{M}-RICS problem search space in Section \ref{sec-pruning}.
    \item We devise offline pre-computation and indexing mechanisms in Section \ref{sec-offline} to facilitate pruning and online Top\textit{M}-RICS algorithms in Section \ref{sec-online}.
    \item We formulate a variant, \textit{top-M relaxed reverse influential community search} (Top\textit{M}-R$^2$ICS), and 
    develop an efficient online Top\textit{M}-R$^2$ICS processing algorithm to retrieve $M$ subgraph answers with the relaxed structural constraints in Section~\ref{sec: extension}.
    \item We demonstrate through extensive experiments the effectiveness and efficiency of our Top\textit{M}-RICS/Top\textit{M}-R$^2$ICS query processing algorithms over real/synthetic graphs in Section \ref{Experiment}.
\end{enumerate}

\section{Problem Definition}
\label{def:Problem definition}

This section first gives the data model for social networks with the information propagation in Section \ref{subsec:social_network}, then provides the definitions of target and seed communities in social networks in Section \ref{subsec:community}, and finally formulate a novel problem of \textit{Top-M Reverse Influential Community Search} (Top\textit{M}-RICS) over social networks in Section \ref{subsec:RICS}.


\subsection{Social Networks}
\label{subsec:social_network}

This subsection models social networks by a graph below.

\begin{definition}
(\textbf{Social Network}, $G$) A social network $G$ is a connected graph in the form of a triple $(V(G), E(G),\Phi(G))$, where $V(G)$ and $E(G)$ represent the sets of vertices (users) and edges (relationships between users) in the graph $G$, respectively, and $\Phi(G)$ is a mapping function: $V(G) \times V(G) \rightarrow E(G)$. Each vertex $v_i \in V(G)$ has a keyword set $v_i.L$, and each edge $e_{u,v} \in E(G)$ is associated with an activation probability $P_{u,v}$.
\label{def:social}
\end{definition}


In a social-network graph $G$ (given by Definition \ref{def:social}), each user vertex $v_i$ contains topic keywords (e.g., user-interested topics like movies and sports) in a set $v_i.L$, and each edge $e_{u,v}$ is associated with an activation probability, $P_{u,v}$, which indicates the influence from user $u$ to user $v$ through edge $e_{u,v}$. Here, the activation probability, $P_{u,v}$, can be obtained based on node attributes (e.g., interests, trustworthiness, locations) \cite{min2020topic}, network topology (e.g., node degree, connectivity) \cite{ali2022leveraging,chen2009efficient}, or machine learning techniques \cite{fang2014topic}.








%

\noindent {\bf Information Propagation Model:} In social networks $G$, we consider an information propagation model defined below.

\begin{definition}
 (\textbf{Information Propagation Model}) Given an acyclic path $Path_{u,v}= u_1 \to u_2 \to \cdots \to u_m$ between vertices $u$ ($=u_1$) and $v$ ($=u_m$) in the social network $G$, we define the influence propagation probability, $\mathit{Pr}(Path_{u,v})$, from $u$ to $v$ as:
\begin{equation}
    \mathit{Pr}(Path_{u,v})=\prod_{i=1}^{m-1} P_{u_i, u_{i+1}},
    \label{equ:PRuv}
\end{equation}
where $P_{u_i, u_{i+1}}$ is the activation probability from vertex $u_i$ to vertex $u_{i+1}$.
\end{definition}


Following the \textit{maximum influence path} (MIP) model \cite{chen2010scalable}, an MIP, $\mathit{MIP}_{u,v}$, is a path from $u$ to $v$ with  the highest influence propagation probability (among all paths $Path_{u,v}$), which is:
\begin{equation}
    \mathit{MIP}_{u,v} = \mathop{\arg\max}\limits_{\forall Path_{u,v}}\mathit{Pr}(Path_{u,v}).
    \label{MIP}
\end{equation}

The \textit{influence score}, $\mathit{inf\_score}_{u,v}$, from vertex $u$ to vertex $v$ in the social network $G$ is given by:
\begin{equation}
    \mathit{inf\_score}_{u,v} = Pr(\mathit{MIP}_{u,v}).
    \label{infSuv}
\end{equation}



\subsection{Community}
\label{subsec:community}

In this subsection, we formally define two terms, \textit{target} and \textit{seed community}, as well as the influence from a seed community to a target community, which will be used for formulating our Top\textit{M}-RICS problem.






\noindent {\bf Target Community:} A \textit{target community} is a group of users whom we would like to influence. For example, in the real application of online advertising/marketing, the target community contains the targeted customers to whom we would like to promote some products; for disease prevention, the target community may contain vulnerable people (e.g., senior/minor people) whom we want to protect from infectious diseases.

Formally, we define the target community as follows.

\begin{definition}
(\textbf{Target Community}) Given a social network $G$, a center vertex $v_q$, a list, $L_q$, of query keywords, and the maximum radius $r$, a target community, $Q$, is a connected subgraph of $G$ (denoted as $Q \subseteq G$), such that:
    \begin{itemize}
        \item $v_q \in V(Q)$;
        \item for any vertex $v_i \in V(Q)$, we have $dist(v_q,v_i) \leq r$, and;
        \item for any vertex $v_i \in V(Q)$, its keyword set $v_i.L$ contains at least one query keyword in $L_q$ (i.e., $v_i.L \cap L_q \neq 	\emptyset$),
    \end{itemize}
where $dist(x,y)$ is the shortest path distance between $x$ and $y$ in $Q$.
\label{def:target_community}
\end{definition}


\noindent {\bf Seed Community:} In addition to directly influence the target community (e.g., advertising to targeted users in social networks, or protecting vulnerable people in contact networks), we can also find a group of other users in $G$ (for advertising or protecting, resp.) that indirectly and highly influence the target community. Such a group of influential users forms a \textit{seed community}.

\begin{definition}
(\textbf{Seed Community}) Given a social network $G$, a set, $L_q$, of query keywords, a center vertex $v_s$, an integer parameter $k$, the maximum number, $N$, of community users, and the maximum radius $r$, a seed community, $S_l$, is a connected subgraph of $G$ (denoted as $S_l \subseteq G$), such that:
    \begin{itemize}
        \item $v_s \in V(S_l)$;
        \item $|V(S_l)| \leq N$;
        \item $S_l$ is a $k$-truss \cite{cohen2008trusses}; 
        \item for any vertex $v_i \in V(S_l)$, we have $dist(v_s,v_i) \leq r$, and;
        \item for any vertex $v_i \in V(S_l)$, its keyword set $v_i.L$ contains at least one query keyword in $L_q$ (i.e., $v_i.L \cap L_q \neq \emptyset$).
    \end{itemize}
\label{def:seed_community}
\end{definition}

In Definition \ref{def:seed_community}, the seed community $S_l$ follows the $k$-truss structural constraint \cite{cohen2008trusses,huang2017attribute}, that is, two ending vertices of each edge in the community $S_l$ have at least $(k-2)$ common neighbors (in other words, each edge is contained in at least $(k-2)$ triangles). This $k$-truss requirement indicates the dense structure of the seed community. Note that, our Top\textit{M}-RICS problem can also be extended to other structural constraints such as $k$-core~\cite{yang2012defining}, $k$-clique~\cite{Tsourakakis15a}, and so on, by designing the corresponding offline pre-computation and pruning strategies. We would like to leave this interesting topic of considering other structural constraints as our future work.

\noindent {\bf The Calculation of the Community-to-Community Influence:} We next define the community-level influence, $inf\_score_{S_l, Q}$, from a seed community $S_l$ to a target community $Q$ (w.r.t. topic keywords in $L_q$) in social networks $G$.

\begin{definition}
(\textbf{Community-to-Community Influence}) Given a target community $Q$, a seed community $S_l$, the community-to-community influence, $\mathit{inf\_score}_{Q, S_l}$, of seed community $S_l$ on target community $Q$ is defined as:
\begin{equation}
    \mathit{inf\_score}_{S_l,Q}=\sum_{u \in V(S_l)} \sum_{v \in V(Q)} \mathit{inf\_score}_{u,v},
    \label{sigma_{C_i,C_j}}
\end{equation}
where $\mathit{inf\_score}_{u,v}$ is the influence of vertex $u$ on vertex $v$ (as given in Equation~(\ref{infSuv})).
\label{def:c2c_influence}
\end{definition}

Intuitively, in Definition \ref{def:c2c_influence}, the community-to-community influence $\mathit{inf\_score}_{S_l, Q}$ (as given in Equation~(\ref{sigma_{C_i,C_j}})) calculates the summed influence for all user pairs (in other words, collaborative influence from users in seed community $S_l$ to that in target community $Q$).

\subsection{The Problem Definition of Top-M Reverse Influential Community Search Over Social Networks}
\label{subsec:RICS}
In this subsection, we propose a novel problem, named \textit{Top-M Reverse Influential Community Search} (Top\textit{M}-RICS) over social networks, which retrieves top-$M$ seed communities with the highest influences on a given target community in a social network $G$. 

\noindent {\bf The Top\textit{M}-RICS Problem Definition:} Formally, we have the following Top\textit{M}-RICS problem definition.

\begin{definition}
(\textbf{Top-M Reverse Influential Community Search Over Social Networks, TopM-RICS})
\label{def:RICS}
Given a social network $G = (V(G), E(G), \Phi(G))$, a set, $L_q$, of query keywords, an integer parameter $k$, the maximum number, $N$, of community users, and a target community $Q$ (with center vertex $v_q$, radius $r$, and query keywords in $L_q$), the problem of \textit{top-M reverse influential community search} (TopM-RICS) retrieves a list, $\mathbb{L}$, of $M$ seed communities, $S_l$ (for $1 \leq l \leq M$), from the social network $G$, such that:
\begin{itemize}
    \item $S_l$ satisfies the constraints of seed communities (as given in Definition \ref{def:seed_community}), and;
    \item $M$ seed communities $S_l$ have the highest community-to-community influences, $\mathit{inf\_score}_{S_l,Q}$. 
\end{itemize}
\end{definition}

Intuitively, the Top\textit{M}-RICS problem retrieves keyword-aware seed communities $S_l$ that have the highest influences on the target community $Q$. In real applications such as online advertising/marketing, we can issue the Top\textit{M}-RICS query over the social network $G$ and obtain \textit{M} seed communities $S_l$ of users to whom we can give group buying coupons or discounts to (indirectly) influence the targeted customers in the target community $Q$.

Table \ref{tab1} lists the commonly used notations and their descriptions in this paper. 

\begin{table}[t!]
\caption{Symbols and Descriptions}
\vspace{-0.15in}
\footnotesize
\label{tab1}
\begin{center}
\begin{tabular}{|l|p{6cm}|}
\hline
\textbf{Symbol}&{\textbf{Description}} \\
\hline\hline
$G$ & a social network\\
\hline
$V(G)$ & a set of vertices $v_i$\\
\hline
$E(G)$ & a set of edges $e(u,v)$\\
\hline
$\Phi(G)$ & a mapping function $V(G) \times V(G) \rightarrow E(G)$\\
\hline
$S_l$ (or $Q$) & a seed community (or target community) in $G$\\
\hline
$\mathbb{L}$ & a list of top-$M$ seed communities \\
\hline
$L_q$ & a set of query keywords\\
\hline
$v_i.L$ & a set of keywords associated with user $v_i$\\
\hline
$v_i.BV$ & a bit vector with the hashed keywords in $v_i.L$\\
\hline
$Path_{u,v}$ &  an acyclic path from user $u$ to user $v$\\
\hline
$Pr(Path_{u,v})$ & the propagation probability that user $u$ activates user $v$ through an acyclic path $Path_{u,v}$\\
\hline
$inf\_score_{u,v}$& the influence score of vertex $u$ on vertex $v$\\
\hline
$inf\_score_{S_l,Q}$ &  the community-to-community influence of $S_l$ on $Q$\\
\hline
$r$-$hop(v_i,G)$ & a subgraph in $G$ with $v_i$ as the vertex and $r$ as the radius
\\
\hline
$r$ &  the user-specified radius of target and seed communities\\
\hline
$k$ &  the support parameter in $k$-truss for the seed community\\
\hline
$sup(e_{u,v})$ & the support of edge $e_{u,v}$\\
\hline
$\theta$ & the influence threshold \\
\hline
\end{tabular}
\end{center}\vspace{-3ex}
\end{table}

\section{The Top\textit{M}-RICS Framework}
\label{RICS framework}
Algorithm \ref{alg_RICS} presents our framework for efficiently processing the Top\textit{M}-RICS query, which consists of two phases, that is, \textit{offline pre-computation} and \textit{online Top\textit{M}-RICS computation} phases.

During the offline pre-computation phase, we pre-calculate some data from social networks (for effective pruning) and construct an index over the pre-computed data, which can be used for subsequent online Top\textit{M}-RICS processing. Specifically, for each vertex $v_i$ in the social network $G$, we first hash its set, $v_i.L$, of keywords into a bit vector $v_i.BV$ (lines 1-2). We also pre-calculate a distance vector, $v_i.Dist$, which stores the shortest path distances from vertex $v_i$ to pivots $piv \in S_{piv}$, where $S_{piv}$ is a set of $d$ carefully selected pivot vertices (line 3). Next, we pre-compute the support bounds, boundary influence upper bound, and influence set for $r$-hop subgraphs (centered at vertex $v_i$ and with radii $r$ ranging from 1 to $r_{max}$), in order to facilitate the pruning (lines 4-7). Afterward, we construct a tree index $\mathcal{I}$ on the pre-computed data (line 8).

During the online Top\textit{M}-RICS computation phase, for each user-specified Top\textit{M}-RICS query, we traverse the index $\mathcal{I}$ and apply our proposed pruning strategies (w.r.t. keywords, support, and influence score) to obtain candidate seed communities (lines 9-10). Finally, we calculate the influence scores between candidate seed communities and target community $Q$ to obtain the $M$ seed communities with the highest influential scores (line 11).

\begin{algorithm}[!ht]\small
\caption{{\bf Top\textit{M}-RICS Processing Framework}\small}
\label{alg_RICS}\footnotesize
\KwIn{
    \romannumeral1) a social network $G$,
    \romannumeral2) a set, $L_{q}$, of query keywords,
    \romannumeral3) the maximum radius, $r$, of each community,
    \romannumeral4) an integer parameter, $k$, of the $k$-truss,
    \romannumeral5) the maximum user number, $N$, for each seed community,
    \romannumeral6) the query center vertex $v_q$,
    \romannumeral7) a set, $S_{piv}$,  of pivots, and
    \romannumeral8) an integer parameter, $M$
}

\KwOut{
    a list, $\mathbb{L}$, of top-\textit{M} seed communitites
}

\tcp{\bf offline pre-computation phase}
\For{each $v_i \in V(G)$}{
    hash keywords in $v_i.L$ into a bit vector $v_i.BV$

    compute a vector, $v_i.Dist$, of distances from $v_i$ to all pivots $piv \in S_{piv}$
    
    \For{$r=1$ to $r_{max}$}{
        extract $r$-hop subgraph $r$-$hop(v_i,G)$
        
        compute the upper bound of support $ub\_sup(.)$ in $r$-$hop(v_i,G)$

        compute the upper bound of boundary influence $ub\_bound\_inf_r(.)$ in $r$-$hop(v_i,G)$
    }
}
build a tree index $\mathcal{I}$ over graph $G$ with pre-computed data as aggregates

\tcp{\bf online Top\textit{M}-RICS computation phase}
\For{each Top\textit{M}-RICS query}{
    
    traverse the tree index $\mathcal{I}$ by applying keyword, support, and influence score pruning strategies to retrieve candidate seed communities

    calculate the influence scores of candidate seed communities and return top-$M$ communities with the highest influential scores
}
\end{algorithm}

\section{Pruning strategies}
\label{sec-pruning}

In this section, we present effective pruning strategies that reduce the problem search space during the online Top\textit{M}-RICS computation phase (lines 9-11 of Algorithm \ref{alg_RICS}).

\subsection{Keyword Pruning}
\label{keyword_pruning_4_1}
According to Definitions \ref{def:target_community} and \ref{def:seed_community}, each vertex in the target/seed community $Q$ or $S_l$ must contain at least one keyword from the query keyword set $L_q$. Therefore, our keyword pruning method can filter out those candidate subgraphs that do not meet this criterion.

\begin{lemma}
    \label{lemma:keyword_pruning}
    {\bf (Keyword Pruning)} Given a set, $L_q$, of query keywords and a candidate subgraph (community) $S_l$, any vertex  $v_i \in V(S_l)$ can be safely pruned from $S_l$, if it holds that: $v_i.L \cap L_q = \emptyset$, where $v_i.L$ is the keyword set associated with vertex $v_i$.
\end{lemma}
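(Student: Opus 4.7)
The plan is to argue directly by contradiction against the keyword constraint imposed on seed communities in Definition~\ref{def:seed_community}. Specifically, I would suppose for contradiction that some vertex $v_i \in V(S_l)$ satisfies $v_i.L \cap L_q = \emptyset$ yet $S_l$ is still a valid seed community containing $v_i$. Then I would invoke the last bullet of Definition~\ref{def:seed_community}, which mandates that every vertex of a seed community must share at least one keyword with $L_q$, i.e., $v_i.L \cap L_q \neq \emptyset$. This directly contradicts the hypothesis, so $v_i$ cannot appear in any valid seed community and may therefore be removed without missing any Top$M$-RICS answer.

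Next, I would make precise what ``safely pruned'' means in the context of Algorithm~\ref{alg_RICS}: no valid seed community $S_l^\ast$ of the input graph $G$ (satisfying all conditions of Definition~\ref{def:seed_community}) can have $v_i$ as one of its vertices when $v_i.L \cap L_q = \emptyset$. Since the pruning only removes vertices of this kind, the set of candidate seed communities after pruning is a superset of the set of valid seed communities, and hence the Top-$M$ answer set (which is a subset of valid seed communities ranked by $\mathit{inf\_score}_{S_l,Q}$) is preserved. This establishes the soundness of the keyword filter.

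Finally, I would briefly note the symmetric statement for the target community, which follows identically from Definition~\ref{def:target_community}, so the same pruning rule can be applied on the target side during preprocessing. There is essentially no technical obstacle here: the lemma is an immediate consequence of the definitions, and the only subtlety worth spelling out is distinguishing ``pruning a vertex'' (removing it from a working candidate set) from ``pruning a community'' (ensuring no valid $S_l$ is lost). Thus the proof will be short—a one-line contradiction plus a sentence confirming that the filter does not discard any valid community from the Top$M$-RICS result.
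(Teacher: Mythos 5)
Your proposal is correct and follows essentially the same route as the paper: both argue that a vertex with $v_i.L \cap L_q = \emptyset$ violates the keyword constraint of Definition~\ref{def:seed_community} and hence can be removed without losing any valid seed community. The extra remarks on the meaning of ``safely pruned'' and the symmetric target-community case are harmless elaborations of the same one-line argument.
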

\begin{proof}
    If $v_i.L \cap L_q = \emptyset$ holds for any user vertex $v_i$ in a candidate community $S_l$, it indicates that user $v_i$ is not interested in any keyword in the query keyword set $L_q$. Thus, user vertex $v_i$ does not satisfy the keyword constraint in  Definition~\ref{def:seed_community}, and vertex $v_i$ can be safely pruned from $S_l$, which completes the proof. \qquad $\square$
\end{proof}

\subsection{Support Pruning}

From Definition \ref{def:seed_community}, the seed community $S_l$ needs to be a \textit{k}-truss \cite{cohen2008trusses}. Denote the support, $sup(e_{u,v})$ of an edge $e_{u,v}$ as the number of triangles containing $e_{u,v}$. Each edge $e_{u,v}$ in the seed community $S_l$, is required to have its support $sup(e_{u,v})$ greater than or equal to $(k-2)$. If we can obtain an upper bound, $ub\_sup(e_{u,v})$, of the support for each edge in the candidate seed community $S_l$, then we can employ the following lemma to eliminate candidate seed communities with low support.

\begin{lemma}
    \label{lemma:support_pruning}
    {\bf (Support Pruning)} Given a candidate seed community $S_l$ and a positive integer $k$ $(>2)$, an edge $e_{u,v}$ in $S_l$ can be discarded safely from $S_l$, if it holds that $ub\_sup(e_{u,v}) < k-2$, where $ub\_sup(e_{u,v})$ is an upper bound of the edge support $sup(e_{u,v})$.
\end{lemma}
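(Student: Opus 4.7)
The plan is a short contrapositive argument that rides on the $k$-truss structural constraint baked into Definition~\ref{def:seed_community}. Recall that a $k$-truss requires every edge $e_{u,v}$ of $S_l$ to participate in at least $(k-2)$ triangles, i.e., $sup(e_{u,v}) \geq k-2$~\cite{cohen2008trusses}. I would show that whenever $ub\_sup(e_{u,v}) < k-2$, keeping $e_{u,v}$ is hopeless: no admissible seed community can contain it, so discarding $e_{u,v}$ cannot remove a valid answer.

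Concretely, I would chain the inequalities $sup(e_{u,v}) \leq ub\_sup(e_{u,v}) < k-2$, where the first step uses that $ub\_sup(e_{u,v})$ is an upper bound on the true support of $e_{u,v}$ inside any candidate seed community containing it, and the second step is the lemma's hypothesis. This yields $sup(e_{u,v}) < k-2$, directly violating the $k$-truss requirement of Definition~\ref{def:seed_community}. Hence any candidate $S_l$ retaining $e_{u,v}$ fails to be a valid seed community, and the edge can be safely pruned from the search with no loss of correctness.

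The only non-cosmetic obstacle is verifying that the pre-computed $ub\_sup(e_{u,v})$ really is an upper bound on the support of $e_{u,v}$ inside every candidate $S_l$, since $ub\_sup$ is defined offline on $r$-$hop$ subgraphs (line~6 of Algorithm~\ref{alg_RICS}), not on the unknown $S_l$. The bridge is the radius constraint: by Definition~\ref{def:seed_community}, any seed community centered at some vertex $v_s$ with radius $r$ is a subgraph of $r$-$hop(v_s, G)$, so every triangle of $S_l$ through $e_{u,v}$ is also a triangle of $r$-$hop(v_s, G)$. Therefore the support of $e_{u,v}$ in $S_l$ is at most its support in $r$-$hop(v_s, G)$, which in turn is bounded by $ub\_sup(e_{u,v})$. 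Once this containment is made explicit, the pruning rule follows immediately from the inequality chain above.
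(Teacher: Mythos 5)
Your proposal is correct and follows essentially the same route as the paper's proof: the inequality chain $sup(e_{u,v}) \leq ub\_sup(e_{u,v}) < k-2$ combined with the $k$-truss requirement of Definition~\ref{def:seed_community}. Your final paragraph justifying why the offline-computed $ub\_sup(e_{u,v})$ on the $r$-hop subgraph genuinely bounds the support inside any candidate $S_l$ is a detail the paper leaves implicit, and making it explicit only strengthens the argument.
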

\begin{proof}
    In the definition of the $k$-truss \cite{cohen2008trusses}, the support value, $sup(e_{u,v})$, of the edge $e_{u,v}$ is determined by the number of triangles that contain edge $e_{u,v}$. In a $k$-truss, each edge must be reinforced by at least $(k-2)$ such triangle structures. 
    Since we have the conditions that $ub\_sup(e_{u,v}) < k-2$ (lemma assumption) and $sup(e_{u,v}) \leq ub\_sup(e_{u,v})$ (support upper bound property), by the inequality transition, we have $sup(e_{u,v}) < k-2$. Therefore, based on Definition \ref{def:seed_community}, the $k$-truss seed community $S_l$ cannot include the edge $e_{u,v}$ due to its low support (i.e., $<k-2$). We thus can safely rule out edge $e_{u,v}$ from $S_l$, which completes the proof. \qquad $\square$
\end{proof}

\subsection{Influence Score Pruning}

In this subsection, we provide an effective pruning method to filter out candidate seed communities with low influence scores.

Since the exact calculation of the influence score between two communities (given by Equation~(\ref{sigma_{C_i,C_j}})) is very time-consuming, we can take the maximum influence from candidate seed communities that we have seen as an influence score upper bound (denoted as an influence threshold $\theta$, which is the $M$-th highest influence score for seed communities we have seen so far). This way, we can apply the influence score pruning in the lemma below to eliminate those seed communities with low influences.


\begin{lemma}
    \label{lemma:influence_pruning}
    {\bf (Influence Score Pruning)} Let an influence threshold $\theta$ be the $M$-th highest influence from candidate seed communities we have obtained so far on the target community $Q$. Any candidate seed community $S_l$ can be safely pruned, if it holds that $ub\_inf\_score_{S_l,Q} < \theta$, where $ub\_inf\_score_{S_l,Q}$ is an upper bound of the influence score $inf\_score_{S_l,Q}$ (given by Equation~(\ref{sigma_{C_i,C_j}})).
\end{lemma}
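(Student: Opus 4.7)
The plan is to prove Lemma~\ref{lemma:influence_pruning} by a short transitivity-plus-counting argument, mirroring the style used for the Keyword and Support Pruning lemmas. First, I would unpack what the threshold $\theta$ guarantees: by construction, at the moment we test $S_l$, there already exist at least $M$ seed communities $S_1', S_2', \ldots, S_M'$ (satisfying all structural, radius, size, and keyword constraints of Definition~\ref{def:seed_community}) whose true community-to-community influences $\mathit{inf\_score}_{S_i',Q}$, computed by Equation~(\ref{sigma_{C_i,C_j}}), are each at least $\theta$. If fewer than $M$ candidates have been materialized yet, I would treat $\theta$ as $-\infty$ (or $0$, since influence scores are non-negative), in which case the hypothesis $ub\_inf\_score_{S_l,Q} < \theta$ is vacuous and the lemma holds trivially.

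Next, I would invoke the upper bound property of $ub\_inf\_score_{S_l,Q}$, namely that $\mathit{inf\_score}_{S_l,Q} \leq ub\_inf\_score_{S_l,Q}$ (this is the defining property of an upper bound; the actual construction of $ub\_inf\_score_{S_l,Q}$ is handled in later sections and is not needed here). Combining this with the lemma hypothesis $ub\_inf\_score_{S_l,Q} < \theta$ via inequality transition yields
\begin{equation*}
  \mathit{inf\_score}_{S_l,Q} \;\leq\; ub\_inf\_score_{S_l,Q} \;<\; \theta.
\end{equation*}

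Finally, I would conclude by a pigeonhole-style argument: since $M$ already-seen candidates have influence $\geq \theta > \mathit{inf\_score}_{S_l,Q}$, community $S_l$ cannot occupy any of the top-$M$ positions in the output list $\mathbb{L}$ of Definition~\ref{def:RICS}. Hence $S_l$ can be discarded without affecting correctness of the final answer, completing the proof.

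The only subtle point, and the part I would be most careful to spell out, is the initialization/tie-breaking behavior of $\theta$: the threshold is only meaningful once at least $M$ candidate seed communities have been evaluated, and the lemma relies on a strict inequality $ub\_inf\_score_{S_l,Q} < \theta$ so that ties with the current $M$-th best are not incorrectly pruned. Everything else is a direct chaining of the upper-bound inequality with the definition of $\theta$, so no heavy machinery is needed.
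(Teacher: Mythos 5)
Your proposal is correct and follows essentially the same route as the paper's proof: chain the upper-bound property $\mathit{inf\_score}_{S_l,Q} \leq ub\_inf\_score_{S_l,Q}$ with the hypothesis $ub\_inf\_score_{S_l,Q} < \theta$ to conclude $S_l$ is dominated by at least $M$ already-seen communities and hence cannot be a Top\textit{M}-RICS answer. Your extra remarks on the initialization of $\theta$ (before $M$ candidates exist) are a reasonable clarification but do not change the argument.
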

\begin{proof}
Since $ub\_inf\_score_{S_l,Q}$ is an upper bound of the influence score $inf\_score_{S_l,Q}$, we have $inf\_score_{S_l,Q}\leq ub\_inf\_score_{S_l,Q}$. Due to the lemma assumption that $ub\_inf\_score_{S_l,Q} < \theta$, by the inequality transition, it holds that $inf\_score_{S_l,Q}< \theta$, which indicates that the candidate community $S_l$ has a lower influence on $Q$ than that of at least $M$ communities we have obtained so far (i.e., the influence threshold $\theta$). In other words, $S_l$ cannot be one of our Top\textit{M}-RICS answers. Therefore, we can safely prune candidate seed community $S_l$ if $ub\_inf\_score_{S_l,Q} < \theta$ holds, which completes the proof.   \quad $\square$
\end{proof}

\section{Offline pre-computation}
\label{sec-offline}
In this section, we discuss how to offline pre-compute data over social networks, and construct a tree index $\mathcal{I}$ on pre-computed data (lines 1-8 of Algorithm \ref{alg_RICS}).

\begin{algorithm}[!ht]\small
\caption{{\bf Offline Pre-Computation}\small}
\label{alg_pffline}\footnotesize
\KwIn{
    \romannumeral1) a social network $G$;
    \romannumeral2) the maximum radius, $r_{max}$, of each community, and;
    \romannumeral3) a set, $S_{piv}$, of $d$ pivots
}
\KwOut{
    pre-computed auxiliary data $v_i.Aux$ for each vertex $v_i$
}

\For{each $v_i \in V(G)$}{
    \tcp{the keyword bit vector}

    hash keywords in $v_i.L$ into a bit vector $v_i.BV_0$

    $v_i.Aux = \{v_i . BV_0\}$

    \tcp{the distance vector to pivots}

    compute a vector, $v_i.Dist$, of distances from vertex $v_i$ to $d$ pivots in $S_{piv}$


    add $v_i.Dist$ to $v_i.Aux$

    \tcp{edge support upper bounds}

    \For{each $e_{u,v} \in E(r_{max}$-$hop(v_i,G))$}{
            compute on edge support upper bound $ub\_sup(e_{u,v})$
        }
}

\For{each $v_i \in V(G)$}{
    
    \For{$r=1$ to $r_{max}$}{

        $v_i.BV_r=\bigvee_{\forall{v_l \in {r\text{-}hop(v_i,G)}}}v_l.BV$
        
        $v_i.ub\_sup_r = \max_{\forall e_{u,v} \in E(r\text{-}hop(v_i,G))}ub\_sup(e_{u,v})$

        $v_i.ub\_bound\_inf_r = max\{{\sf{collapse\_calculate}}(r\text{-}hop(v_i,G))\}$
        
        add $v_i.BV_r$ , $v_i.ub\_sup_r$ and $v_i.ub\_bound\_inf_r$ to $v_i.Aux$
    }
}
\Return $v_i.Aux$
\end{algorithm}


\subsection{Offline Pre-Computed Data}
\label{sec:offline_precomputed_data}
In order to facilitate online Top\textit{M}-RICS computation, we first conduct offline pre-computations on the social network $G$ in Algorithm \ref{alg_pffline}, which can obtain aggregated information about candidate seed communities (later used for pruning strategies to reduce the online search cost). Specifically, for each vertex $v_i$, we hash a set, $v_i.L$, of its keywords into a bit vector $v_i.BV_0$ of size $B$, and initialize a pre-computed set, $v_i.Aux$, of auxiliary data with $v_i.BV_0$ (lines 1-3). Then, we compute the distances from $v_i$ to $d$ pivots in $S_{piv}$, forming a distance vector $v_i.Dist$ of size $d$, and add $v_i.Dist$ to $v_i.Aux$ (lines 4-5). Next, we compute a support upper bound, $ub\_sup(e_{u,v})$, for each edge $e_{u,v}$ in a subgraph, $r_{max}\text{-}hop(v_i,G)$, centered at vertex $v_i$ and with radius $r_{max}$ (lines 6-7). Then, for each vertex $v_i$ and possible radius $r \in [1, r_{max}]$, we pre-compute a keyword bit vector (lines 8-10), an edge support upper bound (line 11), and an upper bound, $v_i.ub\_bound\_inf_r$, of boundary influence scores (line 12) for $r\text{-}hop(v_i, G)$ subgraph. Finally, we add these pre-computed aggregated information to $v_i.Aux$ in the following format: $\{v_i.BV_0, v_i.Dist, v_i.BV_r, v_i.ub\_sup_r,v_i.ub\_bound\_inf_r\}$ (line 13). 

To summarize, $v_i.Aux$ contains the following information:
\begin{itemize}
    \item {\bf a bit vector, \bm{$v_i.BV_0$}, of size \bm{$B$}}, which is obtained by using a hashing function $f(l)$ to hash each keyword $l \in v_i.L$ to an integer between $[0, B-1]$ and set the $f(l)$-th bit position to 1 (i.e., $v_i.BV_0[f(l)]=1$);
    \item {\bf a distance vector, \bm{$v_i.Dist$}, of size \bm{$d$}}, which is obtained by computing the shortest path distances, $dist(v_i, piv)$, from $v_i$ to $d$ pivots $piv_j \in S_{piv}$; (i.e., $v_i.Dist[j]=dist(v_i, piv_j)$ for $0\leq j < d$); 
    \item {\bf a bit vector, \bm{$v_i.BV_r$} (for \bm{$1\leq r\leq r_{max}$})}, which is obtained by hashing each keyword in keyword set $v_l.L$ of a vertex $v_l$ in the subgraph $r\text{-}hop(v_i,G)$ into a position in the bit vector (i.e., $v_i.BV_r=\bigvee_{\forall{v_l \in r\text{-}hop(v_i,G)}}v_l.BV$);
    \item {\bf a support upper bound, \bm{$v_i.ub\_sup_r$}}, which is obtained by taking the maximum of all support bounds $ub\_sup(e_{u,v})$ for edges $e_{u,v}$ in the subgraph $r\text{-}hop(v_i,G)$ (i.e., $v_i.ub\_sup_r = \max_{\forall e_{u,v} \in E(r\text{-}hop(v_i,G)))}ub\_sup(e_{u,v})$), and;
    \item {\bf an upper bound,  \bm{$v_i.ub\_bound\_inf_r$}, of boundary influence scores}, which is obtained by computing the virtual collapse of a subgraph $r\text{-}hop(v_i,G)$ discussed below (i.e., $v_i.ub\_bound\_inf_r = max\{{\sf{collapse\_calculate}}(r\text{-}hop(v_i,G))\}$), {\color{black} where function {\sf collapse\_calculate}($\cdot$) returns a set, $v_i.BIS$, of influence scores through boundary vertices}.
\end{itemize}

\begin{figure}
    \centering
    \vspace{-3ex}
    \includegraphics[width=0.45\textwidth]{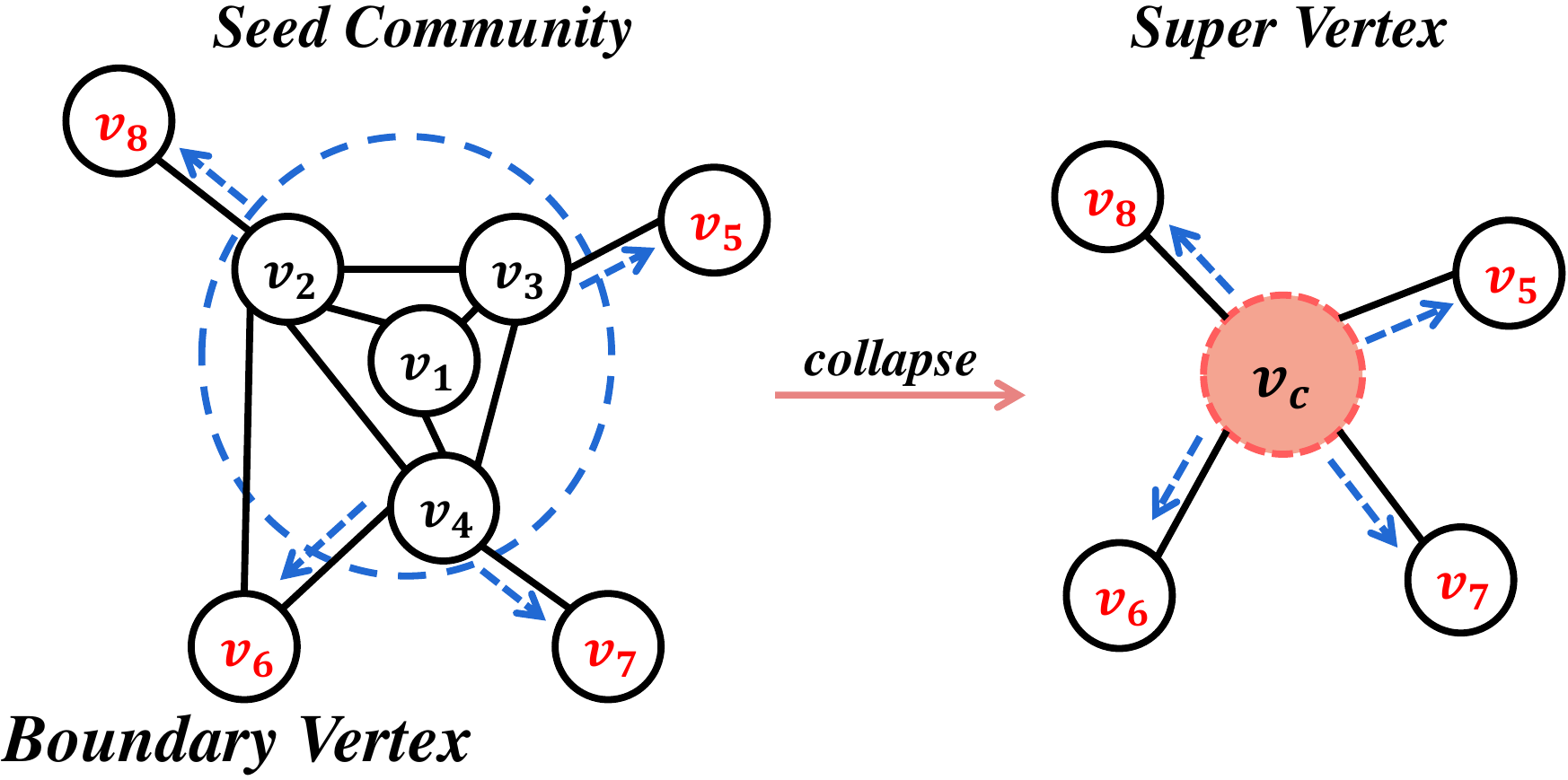}
    \caption{{\color{black}An example of seed community virtual collapse operation. The blue arrows represent the information propagation. $v_c$ represents a virtual super vertex after the community has collapsed. The red vertices represent the boundary vertices.}} 
    \label{fig: collapse}
    \vspace{-3ex}
\end{figure}

{\color{black} \noindent{\bf Discussions on How to Implement $\sf collapse\_calculate(\cdot)$:} Collapse calculations are divided into \textit{target collapse} and \textit{seed collapse}. The difference between the two collapses is that the information propagation is in different directions. As shown in Figure~\ref{fig: collapse}, a seed community consisting of $v_1$, $v_2$, $v_3$, and $v_4$ sends influence to the $1$-hop boundary vertices (i.e., $v_5$, $v_6$, $v_7$, and $v_8$). According to Equation~(\ref{infSuv}) and (\ref{sigma_{C_i,C_j}}), we aggregate the influence of seed communities towards their external boundary vertices. For a virtual collapsed vertex $v_c$ of a seed community $S_l$, through any $1$-hop subgraph boundary vertex $v_i$, we have $\mathit{inf\_score}_{v_c, v_i}=\sum_{v \in V(S_l)} \mathit{inf\_score}_{v, v_i}$. Finally, we store the set of boundary influence scores $v_i.BIS$ in the center vertex $v_i$ of the community.}



{ \color{black}
\noindent{\bf Complexity Analysis:} As shown in Algorithm \ref{alg_pffline}, for each vertex $v_i\in V(G)$ in the first loop, the time complexity of computing a keyword bit vector $v_i.BV$ is given by $O(|L|)$ (lines 2-3). And the time complexity of computing a distance vector $v_i.Dist$ is given by $O((|V(G)|+|E(G)|) \cdot log|V(G)|)$ (lines 4-5). Let $avg\_deg$ denote the average number of vertex degrees. Since there are $avg\_deg^{r_{max}}$ edges in $r_{max}\text{-}hop(v_i, G)$ and the cost of the support upper bound computation is a constant (counting the common neighbors), so the time cost of obtaining all edge support upper bounds is $O(avg\_deg^{r_{max}})$ (lines 6-7).
Thus, the complexity of the first loop (lines 1-7) is given by $O(|V(G)|\cdot (|L|+(|V(G)|+|E(G)|) \cdot log|V(G)|+avg\_deg^{r_{max}}))$.

In the second loop, for each $v_i\in V(G)$, there are $avg\_deg^{r-1}$ vertices in the $r\text{-}hop(v_i, G)$ w.r.t. $r$. Then, for each $r \in [1, r_{max}]$, the time  complexity of computing $v_i.BV_r$ and $v_i.ub\_sup_r$ is given by $O(B\cdot avg\_deg^{r-1})$ and $O(avg\_deg^{r-1})$, respectively (lines 10-11). As described in Section \ref{sec:offline_precomputed_data}, the time complexity of $\sf collapse\_calculate(\cdot)$ is $O((avg\_deg^{r}+avg\_deg^{r-1})\cdot log(avg\_deg^{r-1}))$, and so, the time complexity of $v_i.ub\_bound\_inf_r$ is given by $O(n \cdot (avg\_deg^{r} + avg\_deg^{r-1})\cdot log(avg\_deg^{r-1}))$.
Therefore, the time cost of the second loop (lines 8-13) is $O(|V(G)|\cdot r_{max}\cdot (B\cdot avg\_deg^{r-1}+avg\_deg^{r-1}+n \cdot (avg\_deg^{r} + avg\_deg^{r-1})\cdot log(avg\_deg^{r-1})))$.

In summary, the total time complexity of the total offline pre-computation is given by $O(|V(G)|\cdot (|L|+(|V(G)|+|E(G)|) \cdot log|V(G)|+avg\_deg^{r_{max}}+r_{max}\cdot (B\cdot avg\_deg^{r-1}+avg\_deg^{r-1}+n \cdot (avg\_deg^{r} + avg\_deg^{r-1})\cdot log(avg\_deg^{r-1}))))$.
}

\subsection{Indexing Mechanism} In this subsection, we show the details of offline construction of a tree index $\mathcal{I}$ on a social network $G$ to support online Top\textit{M}-RICS query processing.

\noindent{\bf The Data Structure of Index $\mathcal{I}$:} We will build a tree index $\mathcal{I}$ on the social network $G$, where each index node, $\mathcal N$ includes multiple entries $\mathcal N_i$, each corresponding to a subgraph of $G$.
Specifically, the tree index $\mathcal{I}$ contains two types of nodes, leaf and non-leaf nodes.

\noindent{\it Leaf Nodes:}
{\color{black}
Each leaf node $\mathcal N$ contains multiple vertices $v_i$ in the corresponding subgraph. The community subgraph centered at $v_i$ is denoted by $r\text{-}hop(v_i, G)$. Moreover, each vertex $v_i$ is associated with the following pre-computed data in $v_i.Aux$ (some of them are w.r.t. each possible radius $r\in  [1,r_{max}]$):
\begin{itemize}
    \item a keyword bit vector $v_i.BV_r$;
    \item a distance vector $v_i.Dist$;
    \item a support upper bound $v_i.ub\_sup_r$, and;
    \item a boundary influence upper bound $v_i.ub\_bound\_inf_r$.
\end{itemize}
}


\textit{Non-Leaf Nodes:} Each non-leaf node $\mathcal N$ has multiple index entries, $\mathcal N_i$, each of which is associated with the following aggregates (w.r.t. each possible radius $r \in  [1,r_{max}]$):
\begin{itemize}
    \item a pointer to a child node $\mathcal N_i. ptr$;
    \item an aggregated keyword bit vector $\mathcal N_i.BV_r = \bigvee_{\forall v_l \in \mathcal N_i} v_l.BV_r$;
    \item the distance lower bound vector $\mathcal N_i.lb\_Dist$ (i.e., $N_i.lb\_Dist[j] = \min_{\forall v_l \in \mathcal N_i} v_l.Dist[j]$, for $1\leq j\leq d$);
    \item the distance upper bound vector $\mathcal N_i.ub\_Dist$ (i.e., $N_i.ub\_Dist[j] = \max_{\forall v_l \in \mathcal N_i} v_l.Dist[j]$, for $1\leq j\leq d$);
    \item the maximum support upper bound 
    $\mathcal N_i.ub\_sup_r=\\
    \max_{\forall v_l \in \mathcal N_i} v_l.ub\_sup_r$, and;
    \item the maximum boundary influence upper bound $\mathcal N_i.ub\_bound\_inf_r = \max_{\forall v_l \in \mathcal N_i} v_l.ub\_bound\_inf_r$.
\end{itemize}

{\color{black} \noindent{\bf Index Construction:} To construct the tree index $\mathcal{I}$, we will utilize cost models to first partition the graph into (disjoint) subgraphs of similar sizes to form initial leaf nodes, and then recursively group subgraphs (or nodes) into non-leaf nodes on a higher level, until one final root of the tree is obtained.


}

\noindent{\bf Cost Model for the Graph Partitioning:} {\color{black} Specifically, we use METIS \cite{karypis1998fast} for graph partitioning, guided by our proposed cost model.} Our goal of designing a cost model for the graph partitioning is to reduce the number of cases that candidate communities are across subgraph partitions (or leaf nodes), and in turn achieve low query cost.

Assume that a graph partitioning strategy, $\mathcal{P}$, divides the graph into $m$ subgraph partitions $P_1$, $P_2$, ..., and $P_m$. 
We can obtain the number, $Cross\_Par\_Size(\mathcal{P})$, of \textit{cross-partition vertices for candidate communities} as follows.
\begin{eqnarray}
&&Cross\_Par\_Size(\mathcal{P}) \\
&=& \sum_{j=1}^m\sum_{\forall v_i\in P_j}|V(r_{max}\text{-}hop(v_i,G)-P_j)|\notag
    \label{equ: over partitioning number}
\end{eqnarray}



Since we would like to have the subgraph partitions of similar sizes, we also incorporate the maximum size difference of the resulting partitions in $\mathcal{P}$, and have the following target cost model, $CM(\mathcal{P})$.
\begin{equation}
    CM(\mathcal{P}) = \mathop{\arg\min_\mathcal{P}}(Cross\_Par\_Size(\mathcal{P}) + (|\mathcal{P}_{max}|-|\mathcal{P}_{min}|)),
    \label{equ:CM}
\end{equation}
where $|\mathcal{P}_{max}|$ and $|\mathcal{P}_{min}|$ represent the numbers of users in the largest and smallest partitions in $\mathcal{P}$, respectively. 

Intuitively, we would like to obtain a graph partitioning strategy $\mathcal{P}$ that minimizes our cost model $CM(\mathcal{P})$ (i.e., with low cross-partition search costs and of similar partition sizes, as given in Equation~(\ref{equ:CM})). 

\noindent{\bf Cost-Model-Guided Graph Partitioning for Obtaining Index Nodes:} {\color{black} In Algorithm \ref{alg: index partition selection}, we illustrate how to obtain a set, $\mathbb{S}_{p}$, of $m$ graph partitions for creating index nodes, in light of our proposed cost model above. First, we randomly select $m$ initial vertex pivots and form an initial set, $\mathbb{S}_{piv}$ (line 1). Then, we use $\mathbb{S}_{piv}$ to perform the graph clustering and obtain $m$ partitions in $\mathbb{S}_{p}$ (line 2). We invoke $\sf{calculate\_cost} (\mathbb{S}_{p})$ in Algorithm \ref{alg: calculate_cost} to calculate the cost, $local\_cost$, of the partitioning $\mathbb{S}_{p}$ (i.e., via $CM(\mathbb{S}_{p})$ in Equation~(\ref{equ:CM}) of our cost model; line 3). 

\begin{algorithm}[!ht]\small
\caption{{\bf Cost-Model-Guided Graph Partitioning for Index Nodes}\small}
\label{alg: index partition selection}\footnotesize
\KwIn{
    \romannumeral1) a social network $G$,
    \romannumeral2) the number $m$ of center vertices for partitioning, and
    \romannumeral3) the maximum number of iterations $iter_{max}$
}
\KwOut{
    a set, $\mathbb{S}_{p}$, of $m$ graph partitions for creating index nodes
}

    randomly select $m$ initial vertex pivots and form $\mathbb{S}_{piv}$

    use $\mathbb{S}_{piv}$ clustering to form $m$ partitions $\mathbb{S}_{p}$

    calculate the cost of the partitioning: $local\_cost = \sf{calculate\_cost} (\mathbb{S}_{p})$

    \For{$iter = 1$ to $iter\_max$}{
    
        select a random pivot $piv \in \mathbb{S}_{piv}$

        randomly select a new vertex $piv_{new}$ that satisfies the requirements of $\mathbb{S}_{piv}$

        $\mathbb{S}_{piv}^{'} = \mathbb{S}_{piv}-\{piv\}+\{piv_{new}\}$

        use $\mathbb{S}_{piv}^{'}$ clustering to form $m$ partitions $\mathbb{S}_{p}^{'}$

        calculate the cost of new partitions: $cost_{new} = \sf{calculate\_cost} (\mathbb{S}_{p}^{'})$

        \If{$cost_{new} < local\_cost$}{
            $\mathbb{S}_{piv} = \mathbb{S}_{piv}^{'}$

            $\mathbb{S}_{p} = \mathbb{S}_{p}^{'}$
            
            $local\_cost = cost_{new}$
        }
    }

    \Return $\mathbb{S}_{p}$

\end{algorithm}

Next, we perform $iter\_max$ iterations to find the best pivot set $\mathbb{S}_{piv}$ and graph partitioning $\mathbb{S}_{p}$ with low cost $local\_cost$ (lines 4-13). In each iteration, we randomly replace one of vertex pivots, $piv$, in $\mathbb{S}_{piv}$ with a new non-pivot vertex $piv_{new}$, forming a new pivot set, $\mathbb{S}_{piv}'$ (lines 5-7). This way, we can use $\mathbb{S}_{piv}'$ to perform the graph clustering and obtain $m$ new partitions in $\mathbb{S}_{p}'$, so that we invoke the function $\sf{calculate\_cost} (\mathbb{S}_{p}')$ to calculate a new cost, $cost_{new}$, of partitioning $\mathbb{S}_{p}'$ (lines 8-9). Correspondingly, if $cost_{new}$ is less than $local\_cost$, 
we accept the new partitioning strategy by updating $\mathbb{S}_{piv}$, $\mathbb{S}_{p}$, and $local\_cost$ with $\mathbb{S}_{piv}'$, $\mathbb{S}_{p}'$, and $cost_{new}$, respectively (lines 10-13). Finally, we return $m$ subgraph partitions, $\mathbb{S}_{p}$, to create $m$ index nodes, respectively (line 14).}

\begin{algorithm}[!ht]\small
\caption{\bf calculate\_cost($\cdot$) Function}
\label{alg: calculate_cost}\footnotesize
\KwIn{
    a set, $ \mathbb{S}_{p}$, of partitions over social network $G$
}
\KwOut{
    a cost score, $CM(\mathcal{P})$, for the partitioning in $G$
}
$CM(\mathcal{P}) = 0$

\For{each $P \in  \mathbb{S}_p$}{

    \For{each $v_i \in V(P)$}{

        count the number of vertices that cross the partition $P$'s range: $N\_{cross} = |V(r_{max}\text{-}hop(v_i,G)-P)|$
    }
    $Cross\_Par\_Size(P) = \sum_{\forall v_i \in P}N\_{cross}$
    
    add the value of $Cross\_Par\_Size(P)$ to $CM(\mathcal{P})$
}

add $|\mathcal{P}_{max}|-|\mathcal{P}_{min}|$ to $CM(\mathcal{P})$

\Return $CM(\mathcal{P})$
\end{algorithm}

{\color{black}
\noindent{\bf Complexity Analysis:} For the tree index $\mathcal{I}$, let $\gamma$ denote the fanout of each non-leaf node $\mathcal{N}$. In $\mathcal{I}$, since the number of leaf nodes is equal to the number of vertices $|V(G)|$, the depth of tree index $\mathcal{I}$ is $\lceil \log_{\gamma}{|V(G)|} \rceil + 1$. The time complexity of cost-model-guided graph partitioning for index nodes is given by $O((|V(G)| \cdot m+ |V(G)|)\cdot iter\_max)$. On the other hand, the time complexity of recursive tree index construction is $O((\gamma^{dep}-1)/(\gamma-1) \cdot Partitioning)$. Therefore, the time complexity of our tree index construction is given by $O((\gamma^{\lceil \log_{\gamma}{|V(G)|} \rceil + 1}-1)/(\gamma-1) \cdot {|V(G)| \cdot (m+1)\cdot iter\_max})$.
}

\section{Online Top\textit{M}-RICS Computation}
\label{sec-online}

{\color{black}
In this section, we provide our online Top\textit{M}-RICS computation algorithm in Algorithm~\ref{alg: online_RICS}, which traverses our constructed tree index $\mathcal{I}$ and retrieves the Top\textit{M}-RICS community answer that has the highest influence on the target community $Q$, by seamlessly integrating our effective pruning strategies. 

Section~\ref{Index Pruning} presents effective pruning strategies on the node level of the tree index. Section~\ref{RICS algorithm} details our proposed online Top\textit{M}-RICS query processing procedure.
}


\subsection{Index Pruning}
\label{Index Pruning}

In this subsection, we present effective pruning methods on the index level, which are used to prune index nodes containing (a group of) community false alarms.


\noindent{\bf Keyword Pruning for Index Entries:} The idea of our keyword pruning over index entries is as follows. If all the $r$-hop subgraphs under an index entry $\mathcal{N}_i$ do not contain any keywords in the query keyword set $L_q$, then the entire index entry $\mathcal{N}_i$ can be safely filtered out. 

Below, we provide the \textit{index keyword pruning} method that uses the aggregated keyword bit vector $\mathcal{N}_i.BV_r$ stored in $\mathcal{N}_i$.

\begin{lemma}
    \label{Index Keyword Pruning}
    {\bf (Index Keyword Pruning)} Given an index entry $\mathcal{N}_i$ and a bit vector, $L_q.BV$, for the query keyword set $L_q$, the index entry $\mathcal{N}_i$ can be safely pruned, if it holds that $\mathcal{N}_i.BV_r \wedge L_q.BV = \boldsymbol{0}$. 
\end{lemma}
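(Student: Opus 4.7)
The plan is to reduce the lemma to the vertex-level keyword pruning (Lemma~\ref{lemma:keyword_pruning}) by exploiting the aggregation identity $\mathcal{N}_i.BV_r = \bigvee_{\forall v_l \in \mathcal{N}_i} v_l.BV_r$ stored at the index entry, together with the fact that the Bloom-style hashing used to build the bit vectors can only produce false positives, never false negatives.

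First I would unpack the hypothesis bitwise. Writing the bit vectors of length $B$, the assumption $\mathcal{N}_i.BV_r \wedge L_q.BV = \boldsymbol{0}$ means that for every bit position $j \in [0,B-1]$ with $L_q.BV[j]=1$ we have $\mathcal{N}_i.BV_r[j]=0$. Because OR is monotone, $v_l.BV_r[j] \le \mathcal{N}_i.BV_r[j]$ for every vertex $v_l$ indexed under $\mathcal{N}_i$, so $v_l.BV_r[j]=0$ for each such $j$, i.e.\ $v_l.BV_r \wedge L_q.BV = \boldsymbol{0}$.

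Next I would translate this bitwise statement back into a statement about keywords. By construction, $v_l.BV_r = \bigvee_{\forall v \in r\text{-}hop(v_l,G)} v.BV$, where $v.BV$ has $v.BV[f(\ell)]=1$ for every keyword $\ell \in v.L$, and $L_q.BV$ has $L_q.BV[f(\ell')]=1$ for every $\ell' \in L_q$. If some vertex $v \in r\text{-}hop(v_l,G)$ had a keyword $\ell \in v.L \cap L_q$, then the bit $f(\ell)$ would be set both in $v_l.BV_r$ and in $L_q.BV$, contradicting $v_l.BV_r \wedge L_q.BV = \boldsymbol{0}$. Hence every vertex in $r\text{-}hop(v_l,G)$ satisfies $v.L \cap L_q = \emptyset$, so by Lemma~\ref{lemma:keyword_pruning} no such vertex may belong to a seed community, and in particular no candidate seed community can have its center $v_l$ here (it could not even contain $v_l$ itself). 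Applying this to every $v_l \subseteq \mathcal{N}_i$ shows that $\mathcal{N}_i$ contributes no valid candidate, so the entry can be discarded safely.

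The only subtle point, and the one I would highlight explicitly, is the asymmetry of hashing: since $f(\cdot)$ is not injective, a nonzero AND would only be evidence of a possible match (a false positive is permitted), but a zero AND is conclusive — it rules out every keyword collision because setting any matched bit would have been forced by the OR aggregation. This one-sided correctness is what makes the pruning safe (no valid community is ever discarded), and I would phrase the concluding line of the proof in exactly these terms before closing with $\square$.
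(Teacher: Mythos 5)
Your proof is correct and follows the same route as the paper's: reduce the entry-level test to the per-vertex keyword constraint of Definition~\ref{def:seed_community} via the OR-aggregation of bit vectors. The paper's own proof is a two-line assertion that omits the details you supply — the monotonicity step from $\mathcal{N}_i.BV_r$ down to each $v_l.BV_r$, and the observation that a zero AND is conclusive despite hash collisions — so your version is a strictly more careful rendering of the same argument.
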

\begin{proof}
    If $\mathcal{N}_i.BV_r \cap L_q.BV = \emptyset$ holds, which means that all communities in $\mathcal{N}_i$ do not contain any of the keywords in $L_q$. According to Definition \ref{def:seed_community}, $\mathcal{N}_i$ cannot be a candidate seed community, so it can be safely pruned. \qquad $\square$
\end{proof}

\noindent{\bf Support Pruning for Index Entries:} Next, we present the \textit{index support pruning} method, which utilizes the maximum upper bound support $\mathcal{N}_i.ub\_sup_r$ of the index entry $\mathcal{N}_i$ and the given support $k$ to rule out the entry with low support.
\begin{lemma}
    \label{Index Support Pruning}
    {\bf (Index Support Pruning)} Given an index entry $\mathcal{N}_i$ and a support parameter $k$, the index entry $\mathcal {N}_i$ can be safely pruned, if it holds that $\mathcal{N}_i.ub\_sup_r < k$, where $\mathcal{N}_i.ub\_sup_r$ is the maximum support upper bound for all $r$-hop subgraphs under $\mathcal{N}_i$.
\end{lemma}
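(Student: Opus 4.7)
My plan is to bootstrap the edge-level Support Pruning (Lemma~\ref{lemma:support_pruning}) to the index-entry level by peeling off the two layers of maximisation embedded in $\mathcal{N}_i.ub\_sup_r$. First I would unfold the aggregate using its definitions from Algorithm~\ref{alg_pffline} and the non-leaf node construction: $\mathcal{N}_i.ub\_sup_r = \max_{v_l \in \mathcal{N}_i} v_l.ub\_sup_r$ and $v_l.ub\_sup_r = \max_{e_{u,v} \in E(r\text{-}hop(v_l,G))} ub\_sup(e_{u,v})$. Chaining these two maxima, the hypothesis $\mathcal{N}_i.ub\_sup_r < k$ transfers uniformly downward: for every vertex $v_l$ covered by $\mathcal{N}_i$ and every edge $e_{u,v}$ in the $r$-hop subgraph rooted at $v_l$, we obtain $ub\_sup(e_{u,v}) \leq \mathcal{N}_i.ub\_sup_r < k$, i.e. the support upper bound of every edge that could ever appear inside a candidate seed community rooted under $\mathcal{N}_i$ fails the $k$-truss threshold used by Lemma~\ref{lemma:support_pruning}.

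Second, I would feed each such edge into Lemma~\ref{lemma:support_pruning}: since $ub\_sup(\cdot)$ dominates the true edge support and Definition~\ref{def:seed_community} demands every edge of a seed community to sit in at least $k-2$ triangles, every edge reachable within radius $r$ from any centre $v_l \in \mathcal{N}_i$ is disqualified from participating in a $k$-truss. Consequently, no $r$-hop subgraph rooted at a vertex under $\mathcal{N}_i$ can host a non-empty $k$-truss, so no valid seed community can have its centre in $\mathcal{N}_i$, and the whole index entry may be discarded without losing any Top$M$-RICS answer. The argument parallels the keyword-pruning lemma for index entries just proved above, only replacing the disjunctive bit-vector test with a numeric max.

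I do not anticipate a real technical obstacle here: the proof is essentially bookkeeping over two nested maximisations. The one point I would make explicit in the write-up is that $v_l.ub\_sup_r$ aggregates over \emph{all} edges of $r\text{-}hop(v_l,G)$, not merely those incident on $v_l$, which is exactly what allows the outer maximum over $\mathcal{N}_i$ to dominate every edge that could participate in any candidate seed community centred under $\mathcal{N}_i$ at radius up to $r$. Once this coverage observation is stated, the lemma follows by a single transitive application of Lemma~\ref{lemma:support_pruning} to every edge of every candidate subgraph indexed by $\mathcal{N}_i$.
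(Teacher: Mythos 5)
Your proof follows essentially the same route as the paper's: unfold $\mathcal{N}_i.ub\_sup_r$ as a maximum over the per-vertex aggregates $v_l.ub\_sup_r$, which are themselves maxima over the edges of the $r$-hop subgraphs, push the hypothesis down to every such edge by transitivity, and conclude via the edge-level support requirement of Definition~\ref{def:seed_community} that no candidate seed community can be centred under $\mathcal{N}_i$. One caveat, which your write-up shares with the paper's own proof: the lemma's trigger is $\mathcal{N}_i.ub\_sup_r < k$, whereas Lemma~\ref{lemma:support_pruning} and Definition~\ref{def:seed_community} only disqualify an edge when its support (upper bound) falls below $k-2$; since $ub\_sup(e_{u,v})<k$ does not imply $ub\_sup(e_{u,v})<k-2$, the ``single transitive application of Lemma~\ref{lemma:support_pruning}'' you invoke does not literally go through unless the threshold in the lemma statement is read as $k-2$.
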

\begin{proof}
    $\mathcal N_i.ub\_sup_r$ is the maximum support upper bound in all $r$-hop subgraphs under index entry $\mathcal N_i$. If $\mathcal N_i.ub\_sup_r < k$ holds, then all support upper bounds of $r$-hop subgraphs under $\mathcal N_i$ are less than $k$. By the inequality transition, all the supports of $r$-hop subgraphs under entry $\mathcal N_i$ are thus less than $k$. Based on Definition \ref{def:seed_community}, all $r$-hop subgraphs under $\mathcal N_i$ cannot be a candidate seed community. Therefore, index entry $\mathcal N_i$ can be safely pruned, which completes the proof of this lemma. \qquad $\square$

\end{proof}

\subsection{The Top\textit{M}-RICS Algorithm}
\label{RICS algorithm}
In this subsection, we illustrate our online Top\textit{M}-RICS processing algorithm by traversing the tree index $\mathcal{I}$ in Algorithm~\ref{alg: online_RICS}.

\begin{algorithm}[!h]\small
\caption{{\bf Online Top\textit{M}-RICS Processing}\small}
\label{alg: online_RICS}\footnotesize
\KwIn{
    \romannumeral1) a social network $G$,
    \romannumeral2) a set, $L_{q}$, of query keywords,
    \romannumeral3) the maximum radius, $r$, of each community,
    \romannumeral4) an integer parameter, $k$, of the truss for each seed community,
    \romannumeral5) an integer parameter, $N$, of the maximum number of users for each seed community,
    \romannumeral6) the query center vertex, $v_q$,
    \romannumeral7) an integer parameter, $M$, and
    \romannumeral8) the index $\mathcal{I}$
}
\KwOut{
    a list, $\mathbb{L}$, of top-\textit{M} seed communitites
}

\tcp{initialization}

    hash all keywords in the query keyword set $L_{q}$ into a query bit vector $L_{q}. BV$

    obtain the target community $Q = r\text{-}hop(v_q, G)$

    $\mathbb{L}=\emptyset, C_{cand}=\emptyset, \theta = 0$

    \tcp{index traversal}

    initialize a minimum heap $\mathcal{H}$ accepting index entries in the form $(\mathcal{N},key)$

    insert all entries $\mathcal{N}$ in the root of index $\mathcal{I}$ into heap $\mathcal{H}$

    \While{$\mathcal{H}$ is not empty}{

        $(\mathcal{N},key)=\mathcal{H}.pop()$

        \If{$\theta \geq max\_inf\_ub(\mathcal{H})$}{
            terminal the loop
        }
    
        \eIf{$\mathcal{N}$ is a leaf node}{
            \For{each vertex $v_i\in \mathcal{N}$}{
                obtain the candidate community $C = \mathcal{N}.r\text{-}hop(v_i, G)$

                \If{$C$ cannot be pruned by Lemma \ref{lemma:keyword_pruning}, \ref{lemma:support_pruning}, or \ref{lemma:influence_pruning}}
                {

                    \eIf{$C$ is closer to $Q$ and larger than some community in $\mathbb{L}$}{
                            
                    \eIf{$|\mathbb{L}| < M$}{

                        compute the influence score $inf\_score_{C, Q}$ $=$ $\sf{calculate\_influence}$$(C, Q)$ (for multiple trimmed $C$, if $|C|>N$)

                        add $C$ to $\mathbb{L}$

                               {\bf if} $|\mathbb{L}| = M$ {\bf then} update $\theta$ with the minimum influence score in $\mathbb{L}$

                        }{
                            \If{$v_i.ub\_inf\_score_r > \theta$}{
                            
                                add $C$ to $C_{cand}$
                                
                            }
                        }
                    }{
                                add $C$ to $C_{cand}$
                            }
                }
            }
        }{
        \tcp{$\mathcal{N}$ is a non-leaf node}
        
            \For{each entry $\mathcal{N}_i \in \mathcal{N}$}{
            
                \If{$\mathcal{N}_i$ cannot be pruned by Lemma \ref{Index Keyword Pruning} or \ref{Index Support Pruning}}{

                    insert $(\mathcal{N}_i,key)$ into heap $\mathcal{H}$ 
                    
                }
            }
        }
    }

    \tcp{refinement of candidate communities}
    
    update $C_{cand}$ by sorting on influence score upper bounds
    
    \For{each candidate community $C \in C_{cand}$}{

        \If{$C.ub\_inf\_score_r <\theta$}{
        
            terminal the loop
            
        }
        
        compute the influence score $inf\_score_{C, Q} = \sf{calculate\_influence}$$(C, Q)$ 
        
        \If{$inf\_score_{C, Q} > \theta$}{
            
            add $C$ to $\mathbb{L}$

            remove a candidate community with the lowest influence score from $\mathbb{L}$

            update $\theta$ with the minimum influence score in $\mathbb{L}$
            
        }
    }
    \Return $\mathbb{L}$
\end{algorithm}

\noindent{\bf Initialization: } First, our Top\textit{M}-RICS algorithm obtains a query bit vector $L_q.BV$ by hashing all keywords from the query keyword set $L_q$ (line 1). Then, according to the given query center vertex $v_q$, the algorithm determines the target community $Q$ (line 2). After that, we initialize an empty list, $\mathbb{L}$, to store top-\textit{M} seed communities we have searched so far.  Moreover, we maintain an initially empty list, $C_{cand}$, which keeps a set of potential candidate communities for delayed refinement. We also set a variable, $\theta$, to $0$, which indicates the highest influence score we have encountered so far for the early termination of the index traversal (line 3). 



\noindent{\bf Index Traversal:} 
To facilitate the index traversal, we maintain a \textit{minimum heap} $\mathcal{H}$, which accepts heap entries in the form $(\mathcal{N}, key)$, where $\mathcal{N}$ is an index node, and $key$ is the minimum lower bound of the distances from vertices under node $\mathcal{N}$ to query vertex $v_q$ (line 4). To start the index traversal, we insert all entries in the root of index $\mathcal{I}$ into heap $\mathcal{H}$ (line 5). Then, we traverse the index by accessing entries from $\mathcal{H}$ in ascending order of distance lower bounds (intuitively, communities closer to $v_q$ will have higher influences on $Q$; lines 6-24). 


 Specifically, each time we pop out an index entry $(\mathcal{N}, key)$ with the minimum key from heap $\mathcal{H}$ (line 7). If $\theta \geq max\_inf\_ub(\mathcal{H})$ holds, which indicates that all the candidate communities in the remaining entries of $\mathcal{H}$ cannot have higher influences than the communities we have already obtained, then we can terminate the index traversal (lines 8-9); otherwise, we will check the entries in the node $\mathcal{N}$.



When $\mathcal{N}$ is a leaf node, for each vertex $v_i\in \mathcal{N}$, we first obtain its candidate community $C=r\text{-}hop(v_i,G)$ centered at $v_i$ (line 12).
Then, for the candidate community $C$, we apply the \textit{Keyword Pruning} (Lemma~\ref{lemma:keyword_pruning}), \textit{Support Pruning} (Lemma~\ref{lemma:support_pruning}), and \textit{Influence Score Pruning} (Lemma~\ref{lemma:influence_pruning}) (line 13).
If $C$ cannot be ruled out by these three pruning methods, we will check whether $C$ will be larger and closer to $Q$ than some community in $\mathbb{L}$, which intuitively may have higher influences on $Q$ (line 14). If the size of $\mathbb{L}$ is no more than $M$, we calculate the exact influence score, $inf\_score_{C, Q}$, from $C$ to target community $Q$, by invoking the function {\sf calculate\_influence}$(C, Q)$, and add $C$ to $\mathbb{L}$ as one of the current top-$M$ communities (lines 15-17). 
When we have obtained $M$ candidate communities in $\mathbb{L}$, we will update $\theta$ with the minimum influence score in $\mathbb{L}$ (line 18). If the influence score upper bound $v_i.ub\_inf\_score_r$ is higher than the score threshold $\theta$,  $C$ has the potential to have higher influence on $Q$, and thus will be added to $C_{cand}$ for later refinement (lines 19-21). If $C$ is not closer to $Q$ or not larger than some community in $\mathbb{L}$, we also add candidate community $C$ to $C_{cand}$ for later refinement (lines 22-23). 



When $\mathcal{N}$ is a non-leaf node, we will consider each child node $\mathcal{N}_i\in \mathcal{N}$ (lines 24-25). If entry $\mathcal{N}_i$ cannot be pruned by \textit{Index Keyword Pruning} (Lemma~\ref{Index Keyword Pruning}) and \textit{Index Support Pruning} (Lemma~\ref{Index Support Pruning}), we insert the entry $(\mathcal{N}_i,key)$ into heap $\mathcal{H}$ for further investigation (lines 26-27).

When either the heap $\mathcal{H}$ is empty (line 6) or the remaining index entries in $\mathcal{H}$ cannot contain candidate communities (line 8), we will terminate the index traversal.

\noindent{\bf Refinement of Candidate Communities:} After the index traversal, we update the list $C_{cand}$ by sorting candidate communities in descending order of influence score upper bounds (line 28). Then, for each candidate community $C$ in the list $C_{cand}$, if it holds that $C.ub\_inf\_score_r < \theta$, we can stop checking the remaining candidates in $C_{cand}$ (as all candidates in $C_{cand}$ have influence upper bounds less than the influence threshold $\theta$; lines 29-31). 
Next, we calculate the exact influence, $inf\_score_{C, Q}$, by invoking the function {\sf calculate\_influence}$(C, Q)$ (line 32).
If $inf\_score_{C, Q} > \theta$ holds, we need to add $C$ to the Top\textit{M}-RICS answer list $\mathbb{L}$, remove a candidate community with the lowest influence from $\mathbb{L}$, and update the influence threshold $\theta$ (lines 33-36). Finally, after refining $C_{cand}$, we return $\mathbb{L}$ as the Top\textit{M}-RICS answer list (line 37).

\noindent{\bf Discussions on the Computation of $\sf{calculate\_influence}(C,Q)$:} 
{\color{black}
To exactly calculate the community-to-community influence score (via Equation~(\ref{infSuv}) and (\ref{sigma_{C_i,C_j}})), we need to obtain the influence of each user in the seed community $C$ on the target community $Q$, the whole process is similar to the single-source shortest path algorithm. For each point $v_c$ in $C$, we first visit its 1-hop neighbors $v_b$ and the influence score $inf\_score_{vc, vb}=P_{vc, vb}$. Then, each time, we extend 1-hop neighbors $v_{new}$ forward and compute the current influence score $inf\_score_{vc,v_{new}}=\max_{\forall {v_i\in v_{new}}} (inf\_score_{v_{new},v_i} \cdot P_{v_{new},v_i})$, of $v_c$, until we get the maximum influence score on all node of $Q$.
}

{\color{black} \noindent{\bf Discussions on the Online Computation of Influence Upper Bound $v_i.ub\_inf\_score_r$:} Since we get the upper bound of boundary influence score, $v_i.ub\_bound\_inf_r$, of $\sf{collapse\_calculate}$ data for a subgraph $r\text{-}hop(v_i, G)$, and for a target community, $Q$, with query center vertex, $v_q$, we can get the distance lower bound $lb\_dist(v_i, v_q) = \min([|v_i.Dist[j] - v_q.Dist[j]|]$, for $1 \leq j \leq d)$ between $v_i$ and $v_q$ by \textit{triangle inequality} \cite{Plaisted84}. Then, we can get the upper bound of influence score, $v_i.ub\_inf\_score_r = v_i.ub\_bound\_inf_r \cdot |V(Q)| \cdot \max(P)^{lb\_dist(v_i, v_q)-2\cdot r}$, where $\max(P)$ denotes the maximum neighbor activation probability in $G$.}

{\color{black}
\noindent{\bf Complexity Analysis:} Let $\overline{n_r}$ be the average number of users in the target community $Q$. The cost of obtaining $L_q.BV$ and $Q$ takes $O(\overline{n_r})$. Let $PP_j$ be the pruning power (i.e., the percentage of node entries that can be pruned) on the $j$-th level of the tree index $\mathcal{I}$, where $0 \leq j \leq h$ and $h$ is the height of the tree. Denote $f$ as the average fanout of nodes in index $\mathcal{I}$. For the index traversal, the number of visited nodes is given by $O(\sum_{j=1}^h f_{h-j+1} \cdot (1-PP_j))$. We label a subgraph $r\text{-}hop(v_i, G)$ as $g$. Each time, the function of $\sf{calculate\_influence}$ need $O((|V(g)|+|E(g)|) \cdot \overline{n_r})$. Let $\overline{n_d}$ be the average number of iterations updated due to the closest distance. Then, the updating $\mathbb{L}$ and $\theta$ takes $O(M \cdot (|V(g)|+|E(g)|) \cdot \overline{n_r} \cdot \overline{n_d})$. And, updating $C_{cand}$ takes $O(1)$. For the refinement process, let $\overline{n_m}$ be the average number of calculate influence in $C_{cand}$, and the updating $\mathbb{L}$ and $\theta$ take $O(1)$. Therefore, the total time complexity of Algorithm \ref{alg: online_RICS} is given by $O(\sum_{j=1}^h f_{h-j+1} \cdot (1-PP_j) + ((|V(g)|+|E(g)|) \cdot (M \cdot \overline{n_d} + \overline{n_m}) + 1) \cdot \overline{n_r})$.
}

\section{A Variant of Top\textit{M}-RICS}
\label{sec: extension}

In this section, we formulate and tackle a variant of Top$M$-RICS which relaxes structural constraints for real applications.

\noindent {\bf A Variant, Top$M$-R$^2$ICS, of the Top$M$-RICS Problem:} In Definition \ref{def:RICS}, our Top\textit{M}-RICS problem returns top-$M$ most influential seed communities, where a seed community needs to fulfill structural requirements (e.g., $k$-truss and radius constraint, as given in Definition \ref{def:seed_community}). In this paper, we also consider its variant, named \textit{\textbf{Top-$M$ R}elaxed \textbf{R}everse \textbf{I}nfluential \textbf{C}ommunity \textbf{S}earch} (Top\textit{M}-R$^2$ICS), which obtains top-\textit{M} communities with the relaxed structural constraints and having the highest influences.

\begin{definition}
    (\textbf{Top-$M$ Relaxed Reverse Influential Community Search, Top$M$-R$^2$ICS})
    \label{def:R2ICS}
    Given a social network $G=(V(G), E(G),\Phi(G))$, a set, $L_q$, of query keywords, the maximum number, $N$, of community users, and a target community $Q$ (with center vertex $v_q$, radius $r$, and query keywords in $L_q$), the problem of the top-M relaxed reverse influential community search (TopM-R$^2$ICS) retrieves a list of $M$ subgraphs, $S_l$ (for $1 \leq l \leq M$), from the social network, $G$, such that:
    \begin{itemize}
        \item each subgraph $S_l \subseteq G$ satisfies the size constraint that $|V(S_l)| \leq N$;
        \item for any vertex $v_i \in V(S_l)$, its keyword set $v_i.L$ contains at least one query keyword in $L_q$ (i.e., $v_i.L \cap L_q$$\neq$$\emptyset$), and;
        \item $M$ subgraphs $S_l$ have the highest community-to-community influences, $\mathit{inf\_score}_{S_l, Q}$.
    \end{itemize}
\end{definition}

In Definition \ref{def:R2ICS}, the variant, Top\textit{M}-R$^2$ICS, retrieves top-$M$ subgraphs, $S_l$, without structural constraints such as $k$-truss, radius $r$, and the connectivity, as used in the Top\textit{M}-RICS problem (as given in Definition \ref{def:RICS}).

\noindent {\bf Effective Pruning Strategy w.r.t. Vertex-to-Community Influence Score:} In order to obtain Top\textit{M}-R$^2$ICS community answer $S_l$ in $G$ with the highest influence $inf\_score_{S_l,Q}$, a straightforward method is to enumerate all the vertex combinations and find a community that meets the requirements, which is however rather inefficient. Instead, we observed that the community-to-community influence score $inf\_score_{S_l,Q}$ (as given in Eq.~(\ref{sigma_{C_i,C_j}})) is given by summing up vertex-to-community influences, $inf\_score_{v, Q}$, for all $v\in V(S_l)$. Based on this observation, our basic idea about the Top\textit{M}-R$^2$ICS algorithm is to retrieve those vertices $v$ with high vertex-to-community influences on $Q$ first and early terminate the search to prune or avoid accessing those low-influence vertices. 

Therefore, to enable the pruning of low-influence vertices, we propose an effective \textit{vertex-to-community influence score pruning} strategy as follows.

\begin{lemma}
    \label{lemma:R^2ICS influence}
    \textbf{(Vertex-to-Community Influence Score Pruning)} Let $\theta$ be the $N$-th highest vertex-to-community influence from Top$M$-R$^2$ICS candidate vertices we have obtained so far to the target community $Q$. Any vertex $v$ can be safely pruned, if it holds that $ub\_inf\_score_{v,Q} < \theta$, where $ub\_inf\_score_{v,Q}$ is an upper bound of the influence score $inf\_score_{v,Q}$.
\end{lemma}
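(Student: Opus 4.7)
The plan is to follow the same transitive-upper-bound pattern used in the proofs of Lemmas~\ref{lemma:keyword_pruning}, \ref{lemma:support_pruning}, and \ref{lemma:influence_pruning}, but lifted from the community level down to the vertex level. The goal is to show that a vertex $v$ satisfying $ub\_inf\_score_{v,Q} < \theta$ cannot appear in any Top\textit{M}-R$^2$ICS answer subgraph, and can therefore be safely discarded.

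First, I would invoke the defining property of the upper bound, $inf\_score_{v,Q} \leq ub\_inf\_score_{v,Q}$, and combine it with the lemma's hypothesis $ub\_inf\_score_{v,Q} < \theta$ to conclude $inf\_score_{v,Q} < \theta$ by the inequality transition. Next, I would unpack the meaning of $\theta$: it is the $N$-th highest vertex-to-community influence among candidate vertices (those whose keyword sets intersect $L_q$) encountered so far, so we have already identified at least $N$ keyword-valid vertices whose true influence on $Q$ is at least $\theta$, each of which strictly exceeds $inf\_score_{v,Q}$. Then, appealing to Definition~\ref{def:R2ICS}, which drops the $k$-truss, radius, and connectivity constraints of Top\textit{M}-RICS, together with Equation~(\ref{sigma_{C_i,C_j}}), which expresses $inf\_score_{S_l,Q}$ as an additive sum of the vertex-to-community influences of the members of $S_l$, I would argue by an exchange argument: for any candidate answer $S_l$ that contains $v$, replacing $v$ by any of the $N$ previously seen higher-influence vertices that is not already in $S_l$ yields a subgraph of the same size that still satisfies the size and keyword constraints but has strictly larger $inf\_score_{S_l,Q}$. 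Hence no optimal Top\textit{M}-R$^2$ICS answer can contain $v$, and $v$ may be pruned.

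The main obstacle is justifying this exchange step cleanly in the presence of the top-$M$ requirement, since a priori one might worry that the $M$ returned subgraphs must be distinct and could therefore collectively use more than $N$ distinct vertices. I would resolve this by noting that Definition~\ref{def:R2ICS} does not forbid overlap between the $M$ answer subgraphs, and even if it did, the pool of at least $N$ strictly higher-influence vertices is large enough to accommodate the substitution for any single answer in which $v$ appears; in either case, the replacement strictly increases that answer's community-to-community influence, contradicting its membership in the top-$M$. Once this monotonicity/exchange observation is spelled out, the chain $inf\_score_{v,Q} \leq ub\_inf\_score_{v,Q} < \theta$ immediately yields that $v$ is safe to prune, completing the proof in the style of Lemma~\ref{lemma:influence_pruning}.
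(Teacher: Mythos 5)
Your proposal follows the same route as the paper's proof: chain the upper-bound property $inf\_score_{v,Q}\leq ub\_inf\_score_{v,Q}$ with the hypothesis $ub\_inf\_score_{v,Q}<\theta$ to get $inf\_score_{v,Q}<\theta$, then conclude that $v$ cannot appear in any answer subgraph. The difference is one of completeness rather than strategy: the paper's proof stops at the assertion that $v$ ``cannot be one of our Top$M$-R$^2$ICS answers,'' whereas you actually justify that step via the additivity of $inf\_score_{S_l,Q}$ in Equation~(\ref{sigma_{C_i,C_j}}) and an exchange argument (swap $v$ for one of the at least $N$ higher-influence candidates not already in $S_l$, which preserves the size and keyword constraints of Definition~\ref{def:R2ICS} and strictly increases the score). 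That is the substantive content the lemma needs, so your version is the stronger one; you also write the upper-bound inequality in the correct direction, where the paper's proof contains a sign typo ($\geq$ where $\leq$ is meant). The one place your argument wobbles is the fallback branch for the case where the $M$ answers must be pairwise distinct: there, a vertex ranked exactly $(N+1)$-th by influence legitimately appears in the second-best distinct subgraph (the top-$N$ set with its weakest member replaced), so the substitution you propose reproduces an answer already in $\mathbb{L}$ rather than a new, strictly better one, and the contradiction does not go through; for distinct answers the threshold would need to be the $(N+M-1)$-th highest influence. This imprecision is inherited from the lemma statement itself and is equally unaddressed by the paper, and your main line of argument (overlap permitted, as Definition~\ref{def:R2ICS} is written) is sound.
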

\begin{proof}
    Since $ub\_inf\_score_{v,Q}$ is an upper bound of the influence score $inf\_score_{v,Q}$, we have $inf\_score_{v,Q} \geq ub\_inf\_score_{v,Q}$. If $ub\_inf\_score_{v,Q} < \theta$ holds, means $inf\_score_{v,Q} < \theta$,  which indicates that the vertex $v$ has a lower influence on $Q$, compared with some vertices we have obtained so far (i.e., with influence threshold $\theta$), and $v$ cannot be one of our TopM-R$^2$ICS answers. Therefore, we can safely prune candidate vertex $v$, which completes the proof.
\end{proof}

\noindent {\bf A Framework for the  Top\textit{M}-R$^2$ICS Algorithm:} Algorithm \ref{alg: onlineR^2ICS} illustrates the pseudo code of our Top\textit{M}-R$^2$ICS algorithm over a social network $G$, which consists of \textit{initialization}, \textit{index traversal}, and \textit{candidate vertex refinement} phases. We first initialize the data/variables that will be used in Algorithm \ref{alg: onlineR^2ICS} (lines 1-3). Then, for each user-specified Top\textit{M}-R$^2$ICS query, we traverse the index $\mathcal{I}$ to obtain candidate vertices, by applying our proposed keyword pruning strategies in Sections~\ref{keyword_pruning_4_1} and \ref{Index Pruning}  (lines 4-20). Finally, we refine candidate vertices and return actual Top\textit{M}-R$^2$ICS subgraph list, $\mathbb{L}$, with the highest influence scores on target community $Q$ (lines 21-36).




\underline{\it Initialization:} In the initialization phase, the algorithm first hashes all the query keywords in $L_q$ into a query keyword bit vector $L_q.BV$, and then obtain the target community $Q$ (i.e., $r$-hop subgraph with center vertex $v_q$) (lines 1-2). Next, we prepare an initially empty list $\mathbb{L}$ to store the top-\textit{M} communities, an initially empty set, $v2C\_inf\_set$, to store vertex-to-community influence scores of each vertex in $\mathbb{L}$, an initially empty set, $V_{vis}$, to store the vertices we have searched so far, and an empty candidate vertex set, $V_{cand}$, for later refinement (line 3).

\underline{\it Index Traversal:} We utilize a \textit{minimum heap} $\mathcal{H}$ with heap entries ($\mathcal{N}, key$) for the index traversal, where $\mathcal{N}$ is an index node, and $key$ is defined as the lower bound of distances from vertices under node $\mathcal{N}$ to query vertex $v_q$ (line 4). To start the index traversal, we insert all entries $\mathcal{N}$ in the root of index $\mathcal{I}$ into heap $\mathcal{H}$ (line 5). 


Each time we pop out an index entry $(\mathcal{N}, key)$ with the minimum key from heap $\mathcal{H}$ (lines 6-7). When $\mathcal{N}$ is a leaf node, for each vertex $v_i \in \mathcal{N}$, we apply the \textit{keyword pruning} (Lemma \ref{lemma:keyword_pruning}) (lines 8-10). If $v_i$ cannot be ruled out by this pruning method, we will decide whether we add $v_i$ to $V_{vis}$ or $V_{cand}$ (lines 11-16). 
If $v_i\in \mathcal{N}$ is closer to $v_q$ than all vertices in $V_{vis}$ (i.e., potentially having higher influence on $Q$), we will compute the exact vertex-to-community influence score $inf\_score_{v_i, Q}$, add vertex $v_i$ to $V_{vis}$, and update set $v2C\_inf\_set$ by adding $inf\_score_{v_i, Q}$ (lines 11-14); otherwise (i.e., $v_i$ is far away from $v_q$), we will add vertex $v_i$ to the vertex candidate set $V_{cand}$ for later refinement (lines 15-16).


When $\mathcal{N}$ is a non-leaf node, we will consider each child node $\mathcal{N}_i \in \mathcal{N}$ (lines 17-18). If $\mathcal{N}_i$ cannot be pruned by Lemma~\ref{Index Keyword Pruning}, then we insert entry ($\mathcal{N}_i, key$) into heap $\mathcal{H}$ for further investigation (lines 19-20).

\underline{\it Candidate Vertex Refinement:} Next, we will further check candidate vertices in $V_{cand}$ and form the final subgraph answers in $\mathbb{L}$ with top-$M$ highest influences. 
Specifically, we first sort candidate vertices in $V_{cand}$ in descending order of influence score upper bounds, and let threshold $\theta$ be the minimum influence score in $v2C\_inf\_set$ we have seen so far (lines 21-22). 
Then, for each candidate vertex $v_i \in V_{cand}$, if we have obtained $M$ subgraphs in $\mathbb{L}$, then we can terminate the loop early (lines 34-35); otherwise, we consider the remaining vertices (lines 23-33).

If the visited vertex set $V_{vis}$ reaches the maximum size $N$ and the vertex-to-community score upper bound  $v_i.ub\_inf\_score_r$ is greater than or equal to $\theta$, we can obtain a candidate subgraph $S_l$ consisting of top-$N$ vertices in $v2C\_inf\_set$ with the highest influence scores and add this candidate subgraph $S_l$ to $\mathbb{L}$. Next, we update $V_{vis}$ and $v2C\_inf\_set$ (i.e., remove the vertex of $S_l$ with the highest influence score in $V_{vis}$ and $v2C\_inf\_set$; lines 25-29).
After that, we obtain the exact vertex-to-community influence score $inf\_score_{v_i, Q} = \sf{calculate\_influence}$$(v_i, Q)$, add $v_i$ to $V_{vis}$, add $inf\_score_{v_i, Q}$ to $v2C\_inf\_set$, and update the threshold $\theta$ with $\min(v2C\_inf\_set)$ (lines 30-33). This process continues until we obtain $M$ candidate subgraphs in $\mathbb{L}$ (lines 34-35).


Finally, we return $M$ communities in $\mathbb{L}$ as the Top$M$-R$^2$ICS answer (line 36).

\noindent{\bf Discussions on the Differences from the Top$M$-RICS Processing Algorithm:} 
Different from Top$M$-RICS, Top\textit{M}-R$^2$ICS considers vertex-to-community influence $inf\_score_{v_i, Q}$ on $Q$ (instead of the community-to-community influence). Thus, our Top\textit{M}-R$^2$ICS algorithm aims to obtain $M$ sets of top-$N$ vertices with the highest influence scores on the target community $Q$, which form $M$ communities, respectively, as our Top\textit{M}-R$^2$ICS answers. Moreover, our Top$M$-RICS approach applies various pruning strategies (i.e., Lemmas~\ref{lemma:keyword_pruning}, \ref{lemma:support_pruning}, and \ref{lemma:influence_pruning}) to reduce the community search space, and retrieves top-$M$ most influential communities. In contrast, since Top$M$-R$^2$ICS does not require structural constraints, Algorithm \ref{alg: onlineR^2ICS} only uses keyword and vertex-to-community influence score pruning (i.e., Lemmas~\ref{lemma:keyword_pruning} and \ref{lemma:R^2ICS influence}, resp.) to filter out false alarms. Further, our Top$M$-R$^2$ICS approach considers the vertex-to-community influences of individual vertices in subgraphs with high influences (rather than the community-level retrieval in Top$M$-RICS).

\noindent
\textbf{Discussions on the Correctness of the Top\textit{M}-R$^2$ICS Algorithm:} Since we only use the keyword pruning during the index traversal phase, we can obtain three sets: $V_{vis}$, $V_{cand}$, and $v2C\_inf\_set$, containing all candidate vertices (and their influence set) that meet the keyword requirement. Thus, we do not miss any vertices with high influences on $Q$ in this step.
In Definition~\ref{def:c2c_influence}, the community-to-community influence is given by the summation of vertex-to-community influences. From our Top\textit{M}-R$^2$ICS algorithm in Algorithm \ref{alg: onlineR^2ICS}), we always include in $S_l$ of $\mathbb{L}$ those vertices $v_i$ with the highest vertex-to-community influences $inf\_score_{v_i, Q}$ (lines 26-29). Thus, the Top$M$-R$^2$ICS results are guaranteed to achieve the highest community-to-community influence scores.

{\color{black}


\begin{algorithm}[!htb]\small
\caption{{\bf Online Top\textit{M}-R$^2$ICS Processing}\small}
\label{alg: onlineR^2ICS}\footnotesize
\KwIn{
    \romannumeral1) a set, $L_{q}$, of query keywords,
    \romannumeral2) a query center vertex, $v_q$,
    \romannumeral3) the maximum radius, $r$, of the target community $Q$,
    \romannumeral4) the maximum number, $N$, of users in the R$^2$ICS community,
    \romannumeral5) the index $\mathcal{I}$ over social networks $G$, and
    \romannumeral6) an integer parameter, $M$
}
\KwOut{
    a list, $\mathbb{L}$, of Top\textit{M}-R$^2$ICS communities answer
}

    \tcp{initialization phase}

    hash all keywords in the query keyword set $L_{q}$ into a query bit vector $L_{q}. BV$

    obtain the target community $Q = r\text{-}hop(v_q, G)$

    $\mathbb{L}=\emptyset, v2C\_inf\_set=\emptyset, V_{vis}=\emptyset, V_{cand}=\emptyset$\;

    \tcp{index traversal phase}

    initialize a minimum heap $\mathcal{H}$ accepting index entries in the form $(\mathcal{N},key)$

    insert all entries $\mathcal{N}$ in the root of index $\mathcal{I}$ into heap $\mathcal{H}$

    \While{$\mathcal{H}$ is not empty}{

        $(\mathcal{N},key)=\mathcal{H}.pop()$

    
        \eIf{$\mathcal{N}$ is a leaf node}{
            \For{each vertex $v_i\in \mathcal{N}$}{

                \If{ $v_i$ cannot be pruned by Lemma~\ref{lemma:keyword_pruning}}{

                \eIf{$v_i$ is closer to $v_q$ than all vertices in $V_{vis}$}{
                    
                    compute the influence score $inf\_score_{v_i, Q}$ $=$ $\sf{calculate\_influence}$$(v_i, Q)$

                    add $v_i$ to $V_{vis}$
                    
                    add $inf\_score_{v_i, Q}$ to $v2C\_inf\_set$

                }{
                    add $v_i$ to $V_{cand}$
                }
                
                }
            }
        }
        {   
        \tcp{$\mathcal{N}$ is a non-leaf node}
            \For{each entry $\mathcal{N}_i \in \mathcal{N}$}{
                \If{$\mathcal{N}_i$ cannot be pruned by Lemma~\ref{Index Keyword Pruning}}{
                insert $(\mathcal{N}_i,key)$ into heap $\mathcal{H}$ 
                }
            }
        }
    }

    \tcp{candidate vertex refinement phase}
    
    sort candidate vertices in $V_{cand}$ in descending order of influence score upper bounds

    $\theta = \min(v2C\_inf\_set)$
    
    \For{each candidate vertex $v_i \in V_{cand}$}{

        \eIf{$|\mathbb{L}| < M$}{

            \If{$|V(V_{vis})| \geq N$ and $v_i.ub\_inf\_score_r \geq \theta$}{

                obtain a subgraph $S_l$ consisting of top-$N$ vertices in $v2C\_inf\_set$ with the highest influence score

                add $S_l$ to $\mathbb{L}$

                remove the vertex with the highest influence score in $V_{vis}$

                remove the highest influence score in $v2C\_inf\_set$
                
        
            }
            
            compute the influence score $inf\_score_{v_i, Q} = \sf{calculate\_influence}$$(v_i, Q)$

            add $v_i$ to $V_{vis}$ 
                
            add $inf\_score_{v_i, Q}$ to $v2C\_inf\_set$

            update $\theta = \min(v2C\_inf\_set)$
            
        }{
            terminate the loop
        }
    }
    \Return $\mathbb{L}$
\end{algorithm}

\noindent
\textbf{Complexity Analysis:} As the same Algorithm~\ref{alg: online_RICS}, the initialization and index traversal phase of our Top\textit{M}-R$^2$ICS takes $O(\sum_{j=1}^h f_{h-j+1} + (|V(g)|+|E(g)|+1) \cdot \overline{n_r} \cdot N \cdot M)$. For refinement, let $GPP$ be the vertex-to-community influence score pruning power (i.e., the percentage of vertices that can be pruned). And, updating $\mathbb{L}$, $v2C\_inf\_set$ and $\theta$ takes $O(1)$. Therefore, the total time complexity of Algorithm~\ref{alg: onlineR^2ICS} is given by $O(\sum_{j=1}^h f_{h-j+1} + ((|V(g)|+|E(g)|) \cdot (N + |V(G)| \cdot GPP) + 1) \cdot \overline{n_r} \cdot M)$

\section{Experimental evaluation}
\label{Experiment}
\subsection{Experimental Settings}
We evaluate the performance of the online Top\textit{M}-RICS algorithm (i.e., Algorithm \ref{alg: online_RICS}) on both real and synthetic graph data sets.


\noindent{\bf Real-World Graph Data Sets:} We use three real-world graphs, Facebook \cite{leskovec2012learning}, Amazon \cite{yang2012defining}, and DBLP \cite{zhou2009graph}, whose statistics are depicted in Table \ref{tab:data_set}. Facebook is a social network, where two users are connected if they are friends. Amazon is an Also Bought network, where two products are connected if they are purchased together. DBLP is a co-authorship network, where two authors are connected if they publish at least one paper together.

\noindent{\bf Synthetic Graph Data Sets:} We construct synthetic social networks by generating \textit{small-world graphs} $G$ \cite{newman1999renormalization}. Specifically, we first create a ring of size $|V(G)|$, and then connect $m$ nearest neighbor nodes for each vertex $u$. Next, for each generated edge $e_{u,v}$, we add a new edge $e_{u,w}$ with probability $\mu$ that connects $u$ to a random vertex $w$. Here, we take $m$ = 5 and $\mu$ = 0.251. For each vertex, we randomly generate a keyword set $v_i.L$ from the keyword domain $\Sigma$, following $Uniform$, $Gaussian$, and $Zipf$ distributions, to obtain three synthetic graphs, denoted as $Uni$, $Gau$, and $Zipf$, respectively. Next, for each edge $e_{u,v}$ in the generated graphs, we produce a random value within an interval $[0.5,0.6)$ as the edge activation probability $P_{u,v}$.

\begin{table}[t]
\begin{center}
\caption{Statistics of the tested real-world graph data sets.}
\label{tab:data_set}
\footnotesize
\begin{tabular}{|l||l||l|}
\hline
\textbf{Social Networks} & $|V(G)|$ & $|E(G)|$\\
\hline\hline
    \textit{Facebook\cite{leskovec2012learning}} & 4,039 & 88,234\\\hline
    \textit{Amazon\cite{yang2012defining}} & 334,863 & 925,872\\\hline
    \textit{DBLP\cite{zhou2009graph}} & 317,080 & 1,049,866\\\hline
\end{tabular}
\end{center}
\vspace{-1ex}
\end{table}

\begin{table}[t]
\begin{center}
\vspace{-1ex}
\caption{Parameter settings.}
\label{tab:parameters}\vspace{-2ex}
\footnotesize
\begin{tabular}{|l||l|}
\hline
\textbf{Parameters}&\textbf{Values} \\
\hline\hline
    support, $k$, of truss structure & 3, \textbf{4}, 5\\\hline    
    radius $r$ & 1, \textbf{2}, 3\\\hline
    size, $|L_q|$, of query keywords set & 2, 3, {\bf 5}, 8, 10 \\\hline
    size, $|v_i.L|$, of keywords per vertex & 1, 2, {\bf 3}, 4, 5 \\\hline
    size, $M$, of query result set & {\bf 1}, 2, 3, 4, 5 \\\hline
    keyword domain size $|\Sigma|$ & {\color{black}10, {\bf 20}, 50, 80}\\\hline 
    the number, $d$, of pivots & 3, {\bf 5}, 8, 10\\\hline
    the maximum size, $N$, of seed community& {\color{black}5, {\bf 10}, 15, 20}\\\hline
    the size, $|V(G)|$, of data graph $G$ & 10K, 25K, {\bf 50K}, 100K, 250K\\\hline
\end{tabular}
\end{center}\vspace{-2ex}
\end{table}

\noindent{\bf Competitors:} To our best knowledge, no prior works studied the Top\textit{M}-RICS problem {\color{black}and its variant, Top\textit{M}-R$^2$ICS,} by considering the influence of a connected community (or disconnected subgraph) on a user-specified target community (instead of the entire graph). Therefore, we compare our Top\textit{M}-RICS approach with a straightforward method, called \textit{baseline}. The \textit{baseline} method first determines the target community $Q$ based on the given query vertex and then performs \textit{Breadth First Search} (BFS) from $Q$ in the social network $G$.
For each vertex we encounter (during the BFS traversal), \textit{baseline} obtains its $r$-hop subgraph and checks whether this subgraph satisfies the structure and keyword constraints.
Next, we obtain candidate communities $S_l$ from the $r$-hop subgraph and calculate their influence scores $inf\_score_{S_l, Q}$. If a candidate community $S_l$ has an influence score greater than the best score we have seen so far, we will let $S_l$ be the best-so-far Top\textit{M}-RICS answer.
Finally, after all vertices have been traversed, \textit{baseline} returns top-\textit{M} candidate communities we have obtained with the maximum influence on the target community $Q$.
Note that, since the time cost of the \textit{baseline} method is extremely high, we evaluate this method by sampling $0.1\%$ vertices from the data graph $G$ without replacement. Therefore, the total time can be estimated by $\overline{t} \cdot |V(G)|$, where $\overline{t}$ denotes the average time of each sample.

{\color{black} For Top\textit{M}-R$^2$ICS, we compare our approach (Top\textit{M}-R$^2$ICS, Algorithm~\ref{alg: onlineR^2ICS}) with Top\textit{M}-R$^2$ICS\_WoP and \textit{Optimal} methods. Here, Top\textit{M}-R$^2$ICS\_WoP is our Top\textit{M}-R$^2$ICS approach without the pruning strategy in Section~\ref{sec: extension}, whereas \textit{Optimal} computes the influence score of each vertex on the target community in the original graph and selects a combination of $N$ vertic es with the highest influence score as the query result.}

\noindent{\bf Measure:} {\color{black}To evaluate the efficiency of our Top\textit{M}-RICS / Top\textit{M}-R$^2$ICS approaches, we randomly select 50 query nodes from each graph data set, and take the average of the \textit{wall clock time} over 50 runs}, which is the time cost of online retrieving Top\textit{M}-RICS or Top\textit{M}-R$^2$ICS query results via the index {\color{black}(Algorithms \ref{alg: online_RICS} and \ref{alg: onlineR^2ICS}, respectively).} 

\begin{figure}[t]
    \centering
    \includegraphics[width=0.30\textwidth]{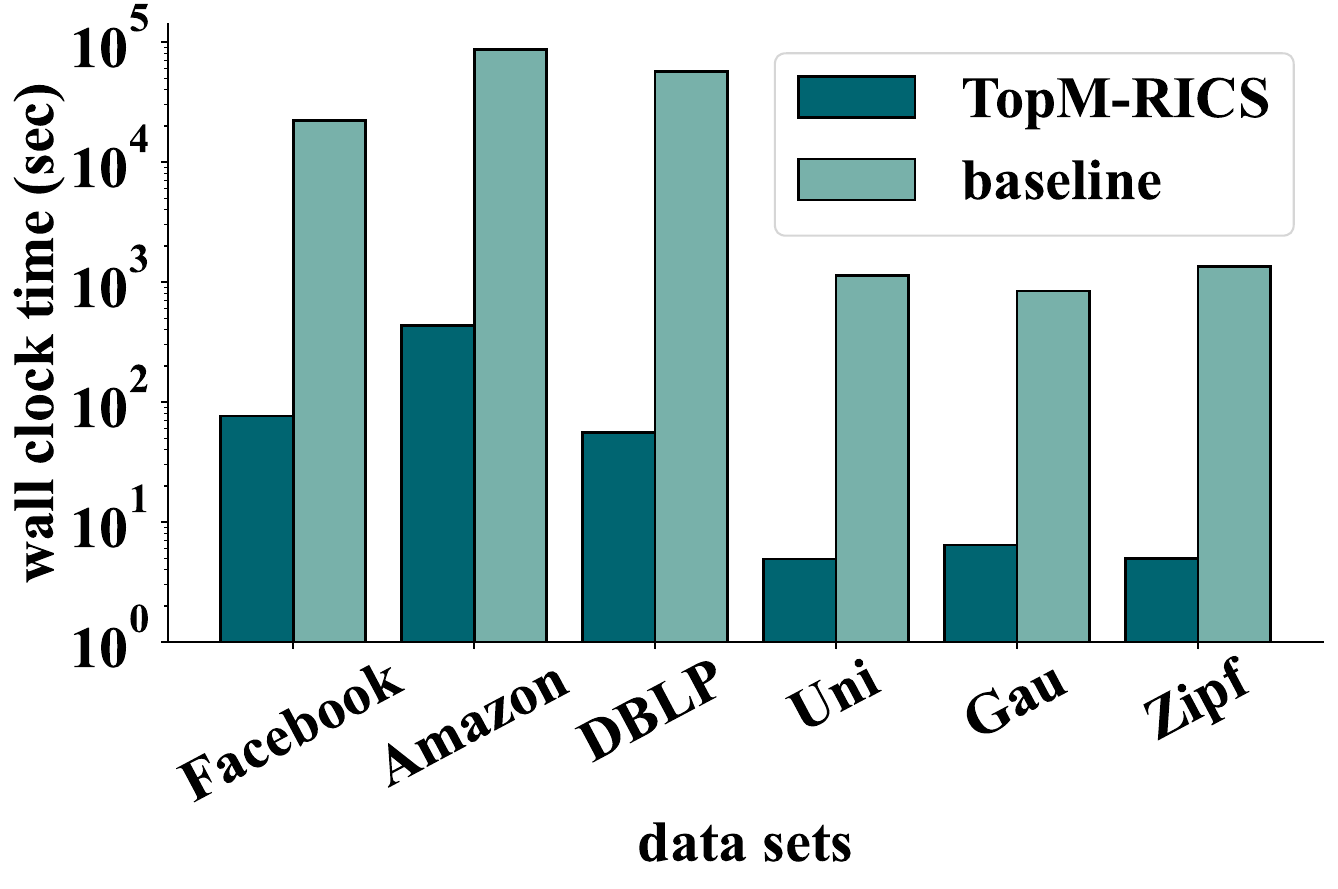}\vspace{-2ex}
    \caption{The Top\textit{M}-RICS performance on real/synthetic graphs.}
    \label{fig:per_out}\vspace{-2ex}
\end{figure}

\begin{figure*}[t]
    \centering
    \subfigure[edge support threshold, $k$]{\includegraphics[width=0.25\textwidth]{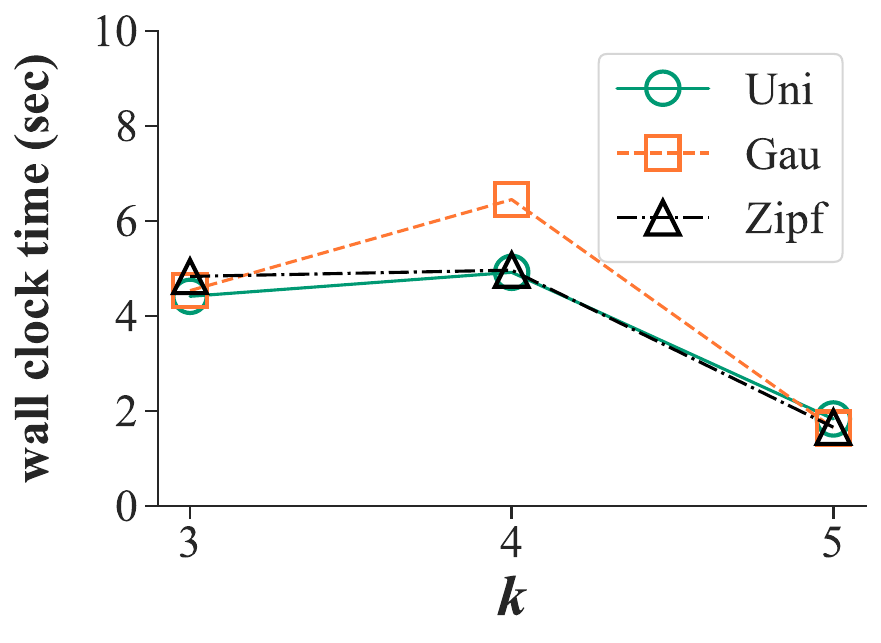}}\hfill
    \subfigure[radius, $r$]{\includegraphics[width=0.25\textwidth]{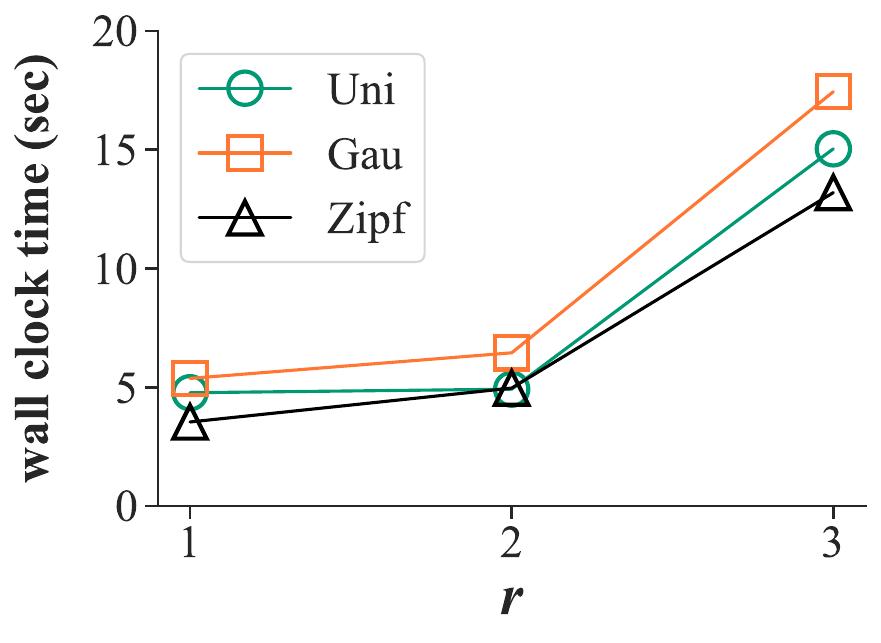}}\hfill
    \subfigure[$\#$ of query keywords, $|L_q|$]{\includegraphics[width=0.25\textwidth]{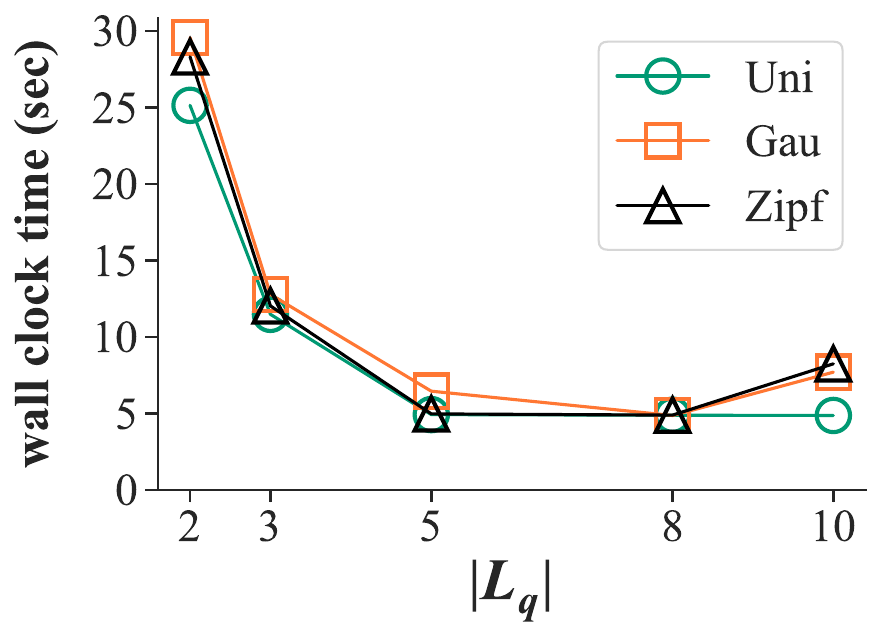}}\hfill
    \subfigure[$\#$ of keywords per vertex, $|v_i.L|$]{\includegraphics[width=0.25\textwidth]{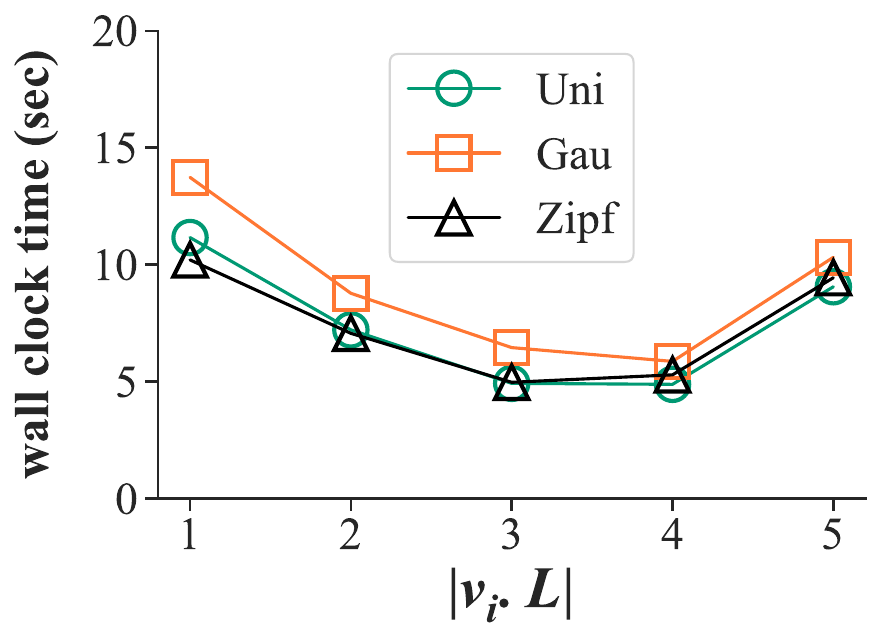}}\\\vspace{-1ex} 
    \subfigure[query result size, $M$]
    {\includegraphics[width=0.25\textwidth]{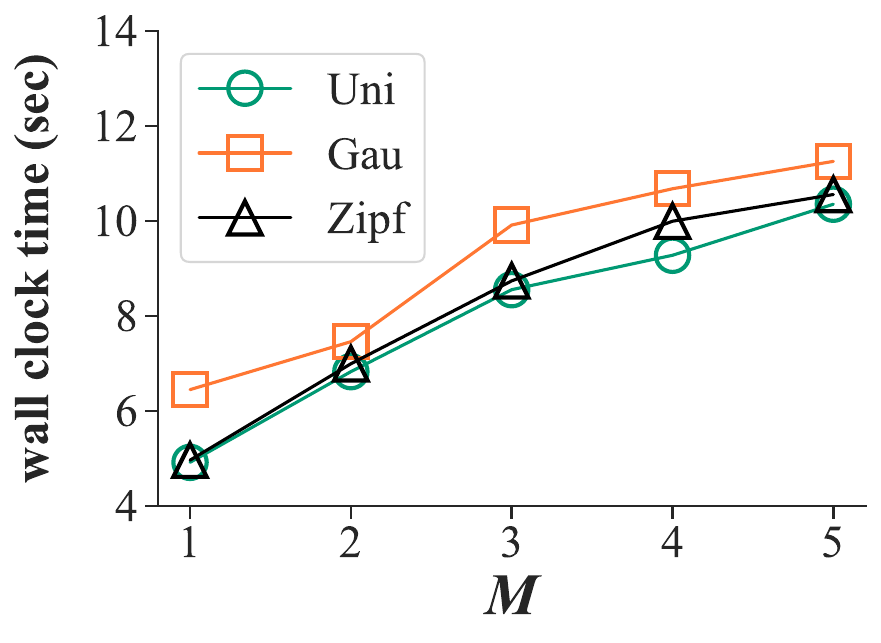}}\hfill
    \subfigure[$\#$ of pivots, $d$]{\includegraphics[width=0.25\textwidth]{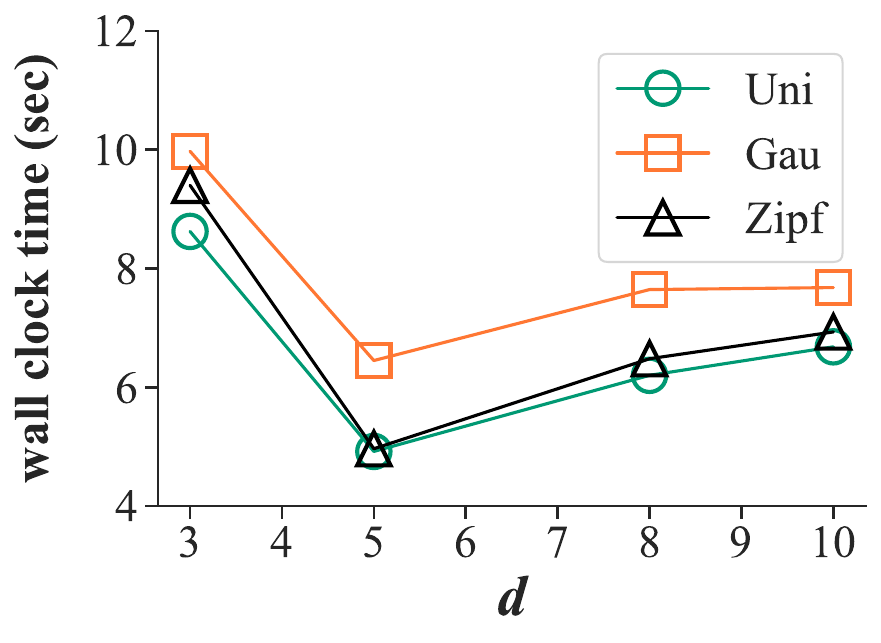}}\hfill
    \subfigure[maximum seed community size, $N$]{\includegraphics[width=0.25\textwidth]{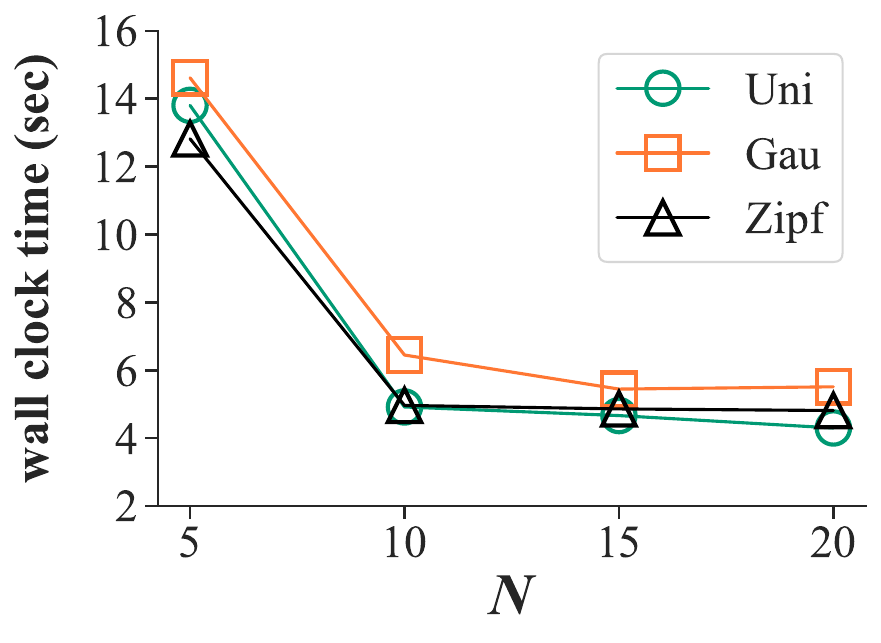}}\hfill
    \subfigure[graph size, $|V(G)|$]{\includegraphics[width=0.25\textwidth]{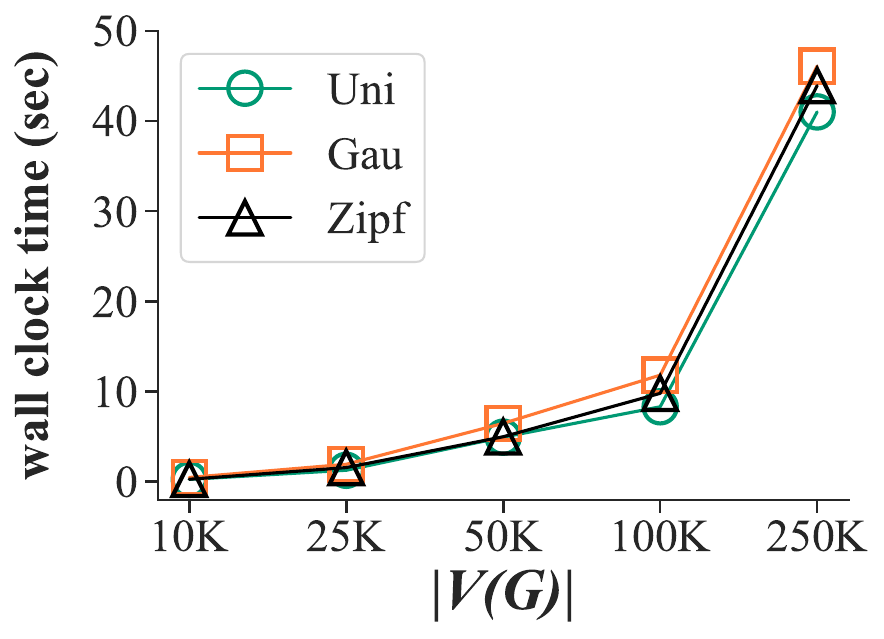}}\par
    \caption{The robustness evaluation of the Top\textit{M}-RICS query performance.}
    \label{fig:subfigures}\vspace{-2ex}
\end{figure*}

\begin{figure}
    \centering
    \includegraphics[width=0.5\linewidth]{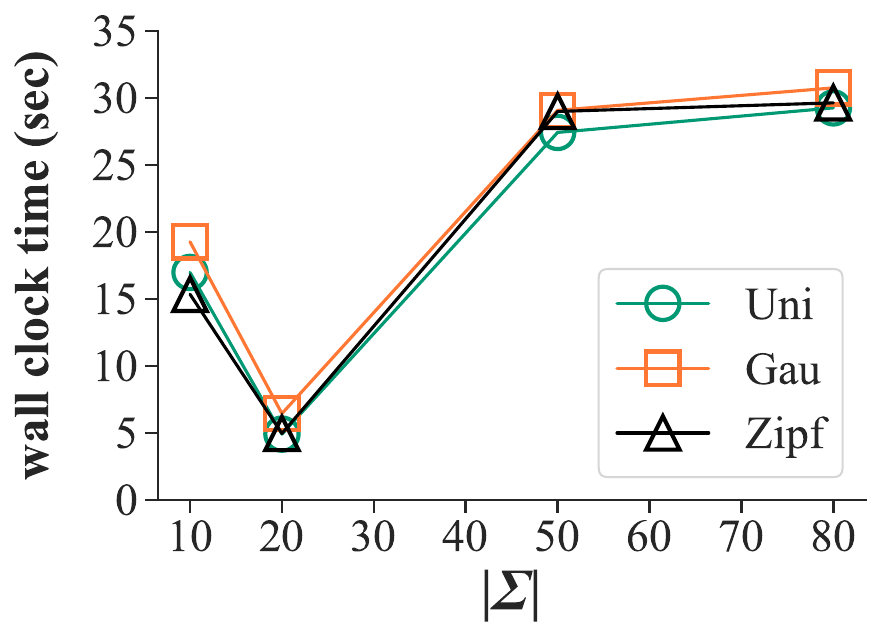}
    \caption{keyword domain size, $|\Sigma|$ of Top\textit{M}-RICS performance}
    \label{fig:sigma}\vspace{-2ex}
\end{figure}

\noindent{\bf Parameter Settings:} Table \ref{tab:parameters} depicts parameter settings, where default values are in bold. Each time, we vary the value of one parameter while setting other parameters to their default values. We ran all the experiments on the PC with Intel(R) Core(TM) i9-10900K CPU  3.70GHz and 32 GB memory. All algorithms were implemented in Python and executed with Python 3.8 interpreter.

\subsection{Top\textit{M}-RICS Performance Evaluation}
\noindent{\bf The Top\textit{M}-RICS Performance on Real/Synthetic Graphs:} Figure \ref{fig:per_out} illustrates the performance of our Top\textit{M}-RICS approach on both real and synthetic graphs, compared with the \textit{baseline} method, where all parameters are set by their default values in Table \ref{tab:parameters}, except for the dense \textit{Facebook} dataset with the maximum seed community size $N$ = 700. Experimental results show that the \textit{wall clock time} of our Top\textit{M}-RICS approach outperforms \textit{baseline} by almost three orders of magnitude, which confirms the effectiveness of our proposed pruning strategies and indexing mechanisms and the efficiency of our Top\textit{M}-RICS approach.

To evaluate the robustness of our Top\textit{M}-RICS approach, in subsequent experiments, we will test the effect of each parameter in Table \ref{tab:parameters} on the query performance over synthetic graphs.

\noindent{\bf Effect of Truss Support Parameter $k$:} Figure \ref{fig:subfigures}(a) shows the Top\textit{M}-RICS query performance for different $k$ values, where $k$ = 3, 4, and 5, and the rest of parameters are set to default values. From this figure, we can find that for larger $k$ values, the query time cost decreases over all three synthetic graphs. This is because larger $k$ leads to fewer candidate communities satisfying the $k$-truss constraints, which in turn incurs lower wall clock time.

\noindent{\bf Effect of Radius $\boldsymbol{r}$:} Figure \ref{fig:subfigures}(b) illustrates the wall clock time of our Top\textit{M}-RICS approach, by varying $r$ from 1 to 3, where other parameters are set to their default values. When the radius $r$ increases, the numbers of vertices included in the target and seed communities also increase, leading to higher filtering and refinement costs. Nevertheless, the wall clock time remains low (i.e., $3.54 \sim 17.44$ $sec$) for all the synthetic graphs.

\noindent{\bf Effect of the Size, $\boldsymbol{|L_q|}$, of the Query Keyword Set:} Figure \ref{fig:subfigures}(c) presents the Top\textit{M}-RICS query performance, where $|L_q|$ = 2, 3, 5, 8, and 10, and other parameters are by default. Intuitively, as $|L_q|$ increases, more candidate seed communities satisfy the keyword requirements. Thus, we will have a higher threshold of the influence score, which results in higher pruning power and, in turn, lower time cost, as confirmed by the figure. However, more candidate seed communities require more refinement costs, and wall clock times increase. In summary, the wall clock time remains low for different $|L_q|$ values (i.e., $4.86 \sim 29.56$ $sec$).

\noindent{\bf Effect of the Size, $\boldsymbol{|v_i.L|}$, of Keywords per vertex:} Figure \ref{fig:subfigures}(d) reports the efficiency of our Top\textit{M}-RICS approach, by varying $|v_i.L|$ from 1 to 5, where default values are used for other parameters. With the increase of $|v_i.L|$, more vertices are likely to be included in candidate seed communities, which leads to a higher influence threshold and higher pruning power (or lower query cost). Meanwhile, larger $|v_i.L|$ will incurs higher filtering/refinement costs. Therefore, the two factors mentioned above show that the wall clock time first decreases and then increases for larger $|v_i.L|$. The wall clock times with different $|v_i.L|$ values are $4.88 \sim 13.73$ $sec$.

\noindent{\bf Effect of the Size, $\boldsymbol{M}$, of Query Result Set:}  Figure \ref{fig:subfigures}(e) shows the Top\textit{M}-RICS query performance with different sizes, $M$, of the returned query result set, where $M$ varies from 1 to 5, and default values are used for other parameters. From the figure, larger $M$ indicates more community answers will be retrieved and refined. Thus, larger $M$ leads to higher time cost. Nonetheless, for different $M$ values, the time costs of Top\textit{M}-RICS remain low (i.e., $4.92 \sim 11.25$ $sec$).

\noindent{\bf Effect of the Number, $\boldsymbol{d}$, of Pivots:} Figure \ref{fig:subfigures}(f) shows the Top\textit{M}-RICS query performance for various numbers of pivots, where $d$ = 3, 4, 5, 6, and 8, and default values are used for other parameters.
When $d$ increases, the distance lower bounds from candidate communities $S_l$ to target community $Q$ are tighter, which incurs better searching order of candidate communities and achieves higher influence threshold earlier (or lower query costs). However, more pivots will also lead to higher computation costs for distances with lower bounds. Therefore, in the figure, for larger $d$ values, the wall clock time first decreases and then increases. Nonetheless, the wall clock times remain low (i.e., $4.92 \sim 9.97$ $sec$).

\noindent{\bf Effect of the Maximum Size, $\boldsymbol{N}$, of Seed Communities:} Figure \ref{fig:subfigures}(g) evaluates the performance of our Top\textit{M}-RICS approach, where the maximum size, $N$, of seed communities varies from 5 to 20, and other parameters are by default. The smaller $N$ is, while we have fewer candidate communities, the computational cost of the $k$-truss subgraph with maximum influence performed to obtain these candidate communities is greatly increased. Therefore, in the figure, when $N$ increases, the wall clock time decreases for all the three synthetic graphs. Nevertheless, the time costs remain low (i.e., $4.30 \sim 14.62$ $sec$) for different $N$ values.

\noindent{\bf Effect of the Size, $\boldsymbol{|V(G)|}$, of the Data Graph $G$: } Figure \ref{fig:subfigures}(h) tests the scalability of our Top\textit{M}-RICS approach, where graph size $|V(G)|$ = $10K$, $25K$, $50K$, $100K$, and $250K$, and the rest of parameters are set by their default values. From the figure, we can see that, with the increase of the graph size $|V(G)|$, the number of candidate seed communities also increases, which leads to higher pruning/refinement costs and more wall clock times. Nonetheless, even when $|V(G)| = 250K$ (i.e., $250K$ vertices in graph $G$), the time costs are less than 46.10 $sec$ for all the three synthetic graphs, which confirms the efficiency and scalability of our proposed Top\textit{M}-RICS approach on large-scale social networks.

\noindent{\bf Effect of Keyword Domain Size $\boldsymbol{|\Sigma|}$:} Figure \ref{fig:sigma} illustrates the Top\textit{M}-RICS query performance with different keyword domain sizes $|\Sigma|$ = 10, 20, 50, and 80, where other parameters are set to default values. 
From this figure, we can find that, since larger $\Sigma$ will improve the pruning power of keyword pruning, the community computational cost decreases. On the other hand, fewer candidate communities also lead to lower impact thresholds and lower pruning power. Thus, for all three synthetic graphs, the wall clock time decreases and then increases as $\Sigma$ increases. Nevertheless, the wall clock times remain low (i.e., $4.92 \sim 30.75$ $sec$).

\subsection{Ablation Study} 
To evaluate the effectiveness of our proposed pruning strategies, we conduct an ablation study over real/synthetic graphs. Figures \ref{fig:ablation} and \ref{fig:ablation_time} test the number of pruned communities and wall clock time of our Top$M$-RICS approach, respectively, for different pruning combinations by adding one more pruning strategy each time: (1) \textit{keyword pruning} only, (2) \textit{keyword + support pruning}, and (3) \textit{keyword + support + influence score pruning}, where all parameters are set to default values. From the figures, we can see that, by using more pruning strategies, the number of pruned communities increases by 1-3 orders of magnitude, and the time cost decreases by an order of magnitude accordingly.

\begin{figure}
    \centering
    \subfigure[pruning power] {
        \includegraphics[height=3cm]{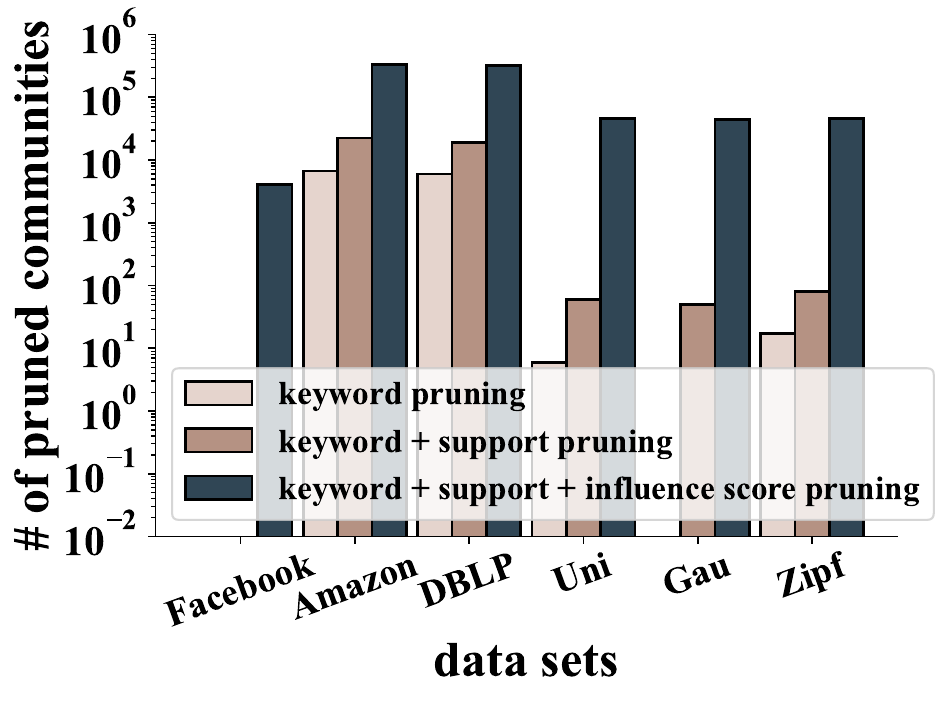}
        \label{fig:ablation}
    }\hspace{-0.3cm}
    \subfigure[time cost]{
        \includegraphics[height=3cm]{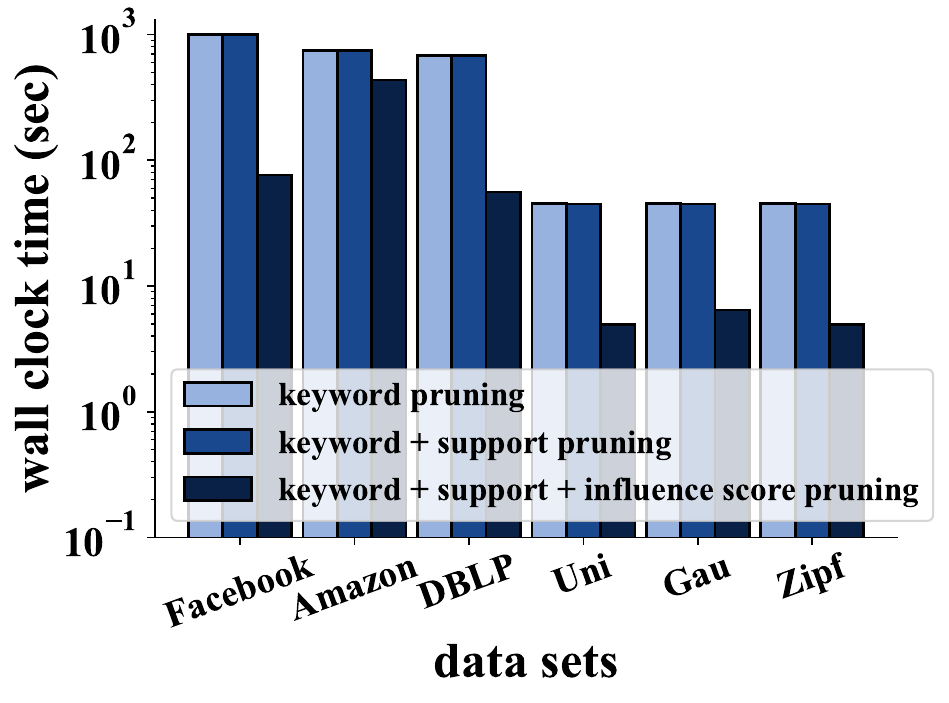}
        \label{fig:ablation_time}
    }
    \vspace{-1.5ex}
    \caption{ The ablation study of the Top\textit{M}-RICS performance.}
    \label{fig:ablation_all}
\end{figure}

{\color{black}
\subsection{Case Study} To evaluate the usefulness of our Top\textit{M}-RICS results, we conduct a case study to compare the influences of the Top\textit{M}-RICS seed community ($M=1$) under $k$-truss semantics with that of using $k$-core \cite{li2018persistent} over $DBLP$ data. 
Figure~\ref{fig:casestudy} visualizes the influence propagation between seed and target communities, where blue triangles and red stars are vertices in the seed and target communities, respectively, and shades of the edge color reflect influence scores.
From the figures, we can see that, for the same target community, although the 4-core community has more vertices, our Top\textit{M}-RICS seed community has an influence score of 15.74, which is significantly greater than that of the 4-core community (i.e., 4.72). This confirms the usefulness of community semantics in our Top\textit{M}-RICS problem, which can achieve high influences.

}
\begin{figure}[t]
    \subfigure{
        \includegraphics[width=0.48\textwidth]{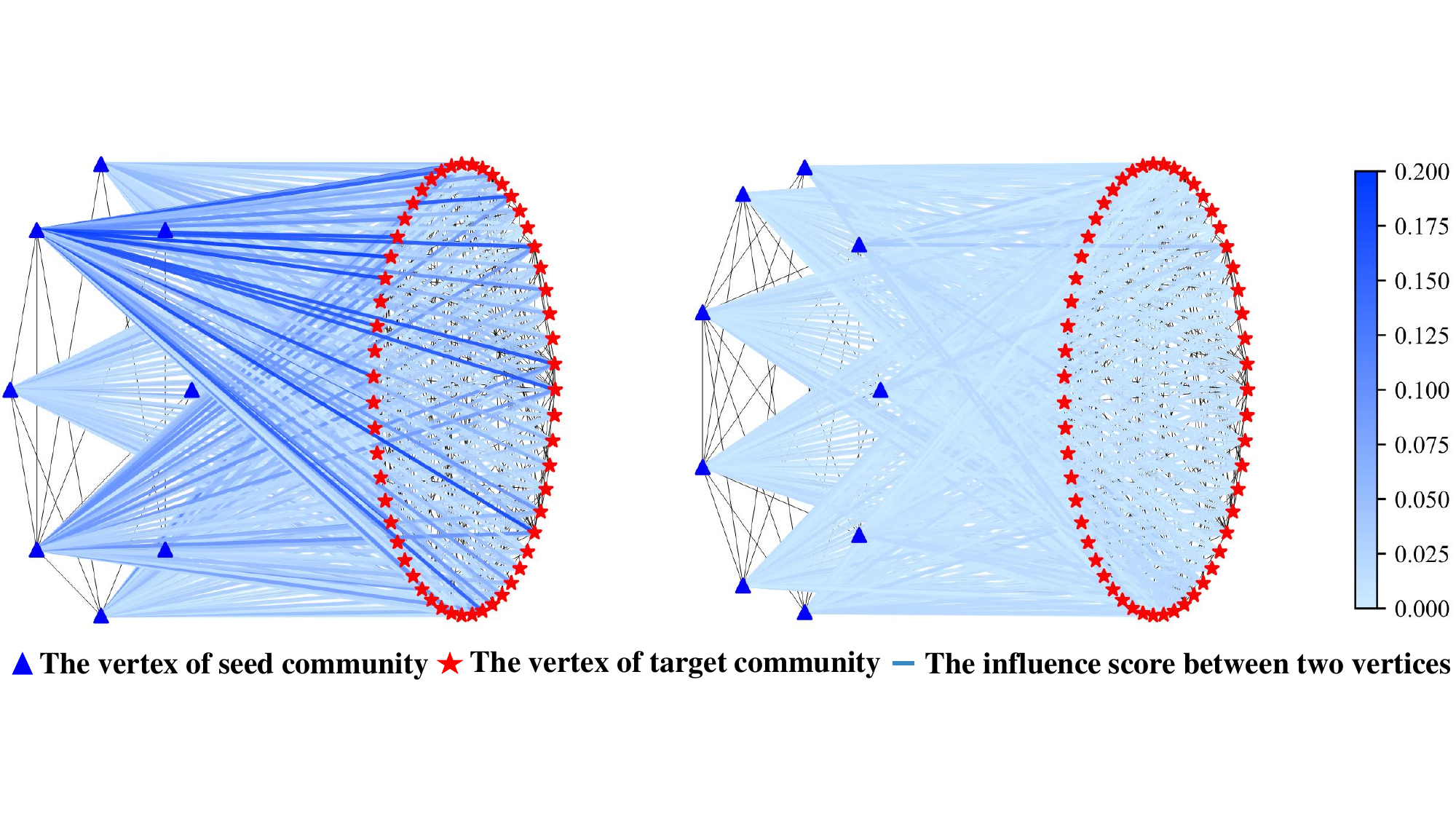}
        \label{fig:case3}
    }
    \centering
    \subfigure[Top$M$-RICS (4-truss, \newline \text{\quad} influence score = 15.74)]{
        \setcounter{subfigure}{1}
        \includegraphics[height=3cm]{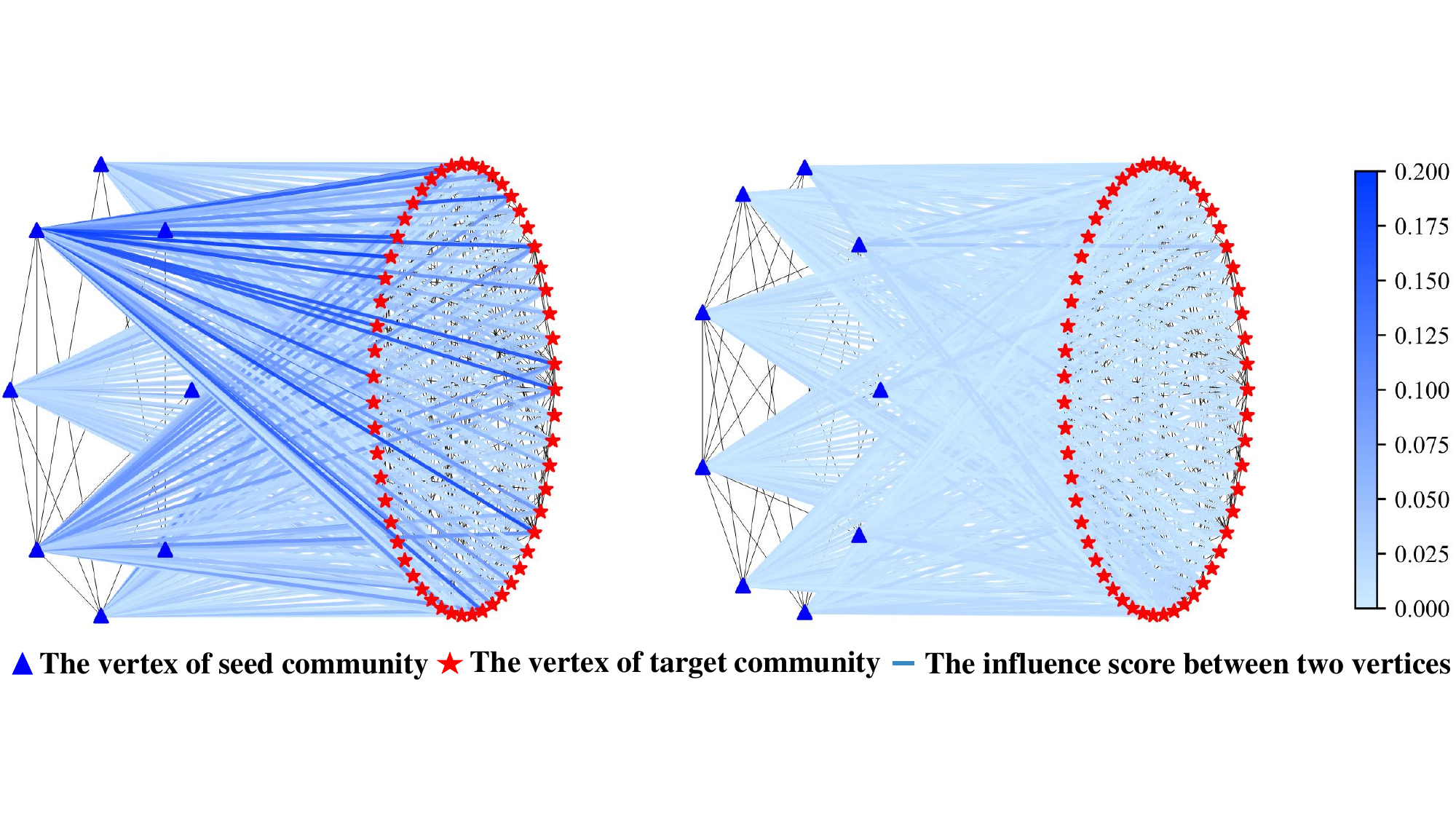}
        \label{fig:case1}}\hspace{-0.3cm}
    \subfigure[Top$M$-RICS (4-core, influence score \text{\qquad} = 4.72)]{
        \includegraphics[height=3cm]{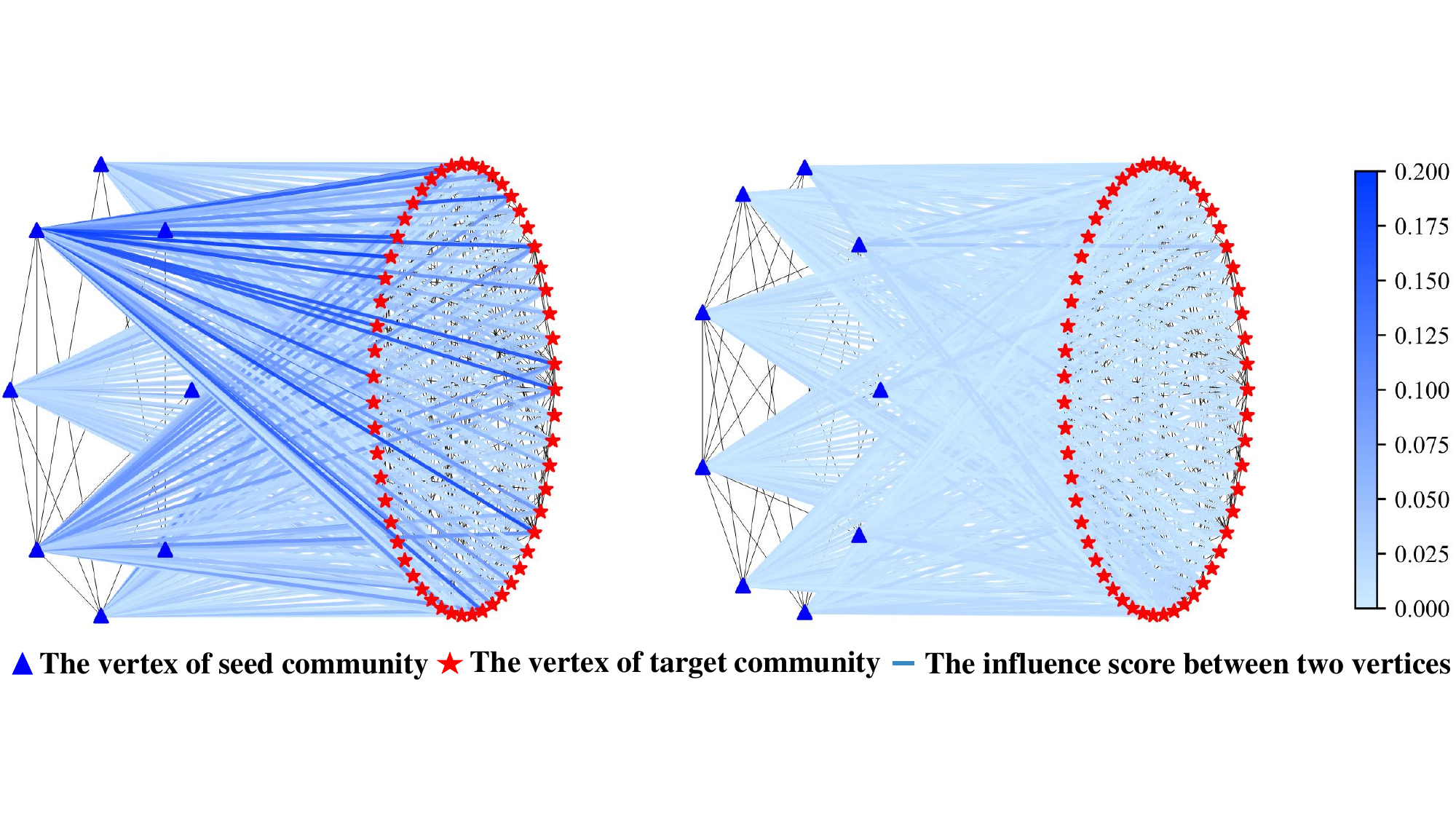}
        \label{fig:case2}
    }
    \vspace{-3ex}
    \caption{A case study of Top$M$-RICS with different community structure semantics over $DBLP$ ($M=1$).}
    \label{fig:casestudy}
    \vspace{-1ex}
\end{figure}

\subsection{Top\textit{M}-R$^2$ICS Performance Evaluation}

\noindent{\bf The Top\textit{M}-R$^2$ICS Performance on Real/Synthetic Graphs:} Figure~\ref{fig:R^2ICS+rs} illustrates the performance of our Top\textit{M}-R$^2$ICS approach on both real and synthetic graphs, compared with the Top\textit{M}-R$^2$ICS\_WoP and \textit{Optimal} methods, 
where all parameters are set by their default values in Table~\ref{tab:parameters}. From the figure, since Top\textit{M}-R$^2$ICS uses Lemma~\ref{lemma:R^2ICS influence} as an influence upper bound pruning strategy, it is unnecessary to specifically compute candidate vertices with a very small influence upper bound on the query target community. And, we can see the \textit{wall clock time} of Top\textit{M}-R$^2$ICS approach outperforms Top\textit{M}-R$^2$ICS\_WoP by about one order of magnitude and outperforms \textit{Optimal} by about two orders of magnitude. Moreover, every vertex is fully considered in our refinement process, and the accuracy of our method is 100\% as in \textit{Optimal}. These results confirm our overall method's effectiveness and our Top\textit{M}-R$^2$ICS's efficiency on real and synthetic graphs.

Figure~\ref{fig:R^2ICS+M} tests the effect of the size, $M$, of query answer set on the Top\textit{M}-R$^2$ICS query performance, by varying $M$ from 1 to 5, where other parameters are set to their default values. When $M$ increases, more candidate subgraphs need to be retrieved and refined, which incurs higher time cost. Nevertheless, the wall clock time remains low (i.e., $6.27 \sim 16.09$ $sec$).

\begin{figure}
    \centering
    \subfigure[time cost vs. graphs] {
        \includegraphics[height=3cm]{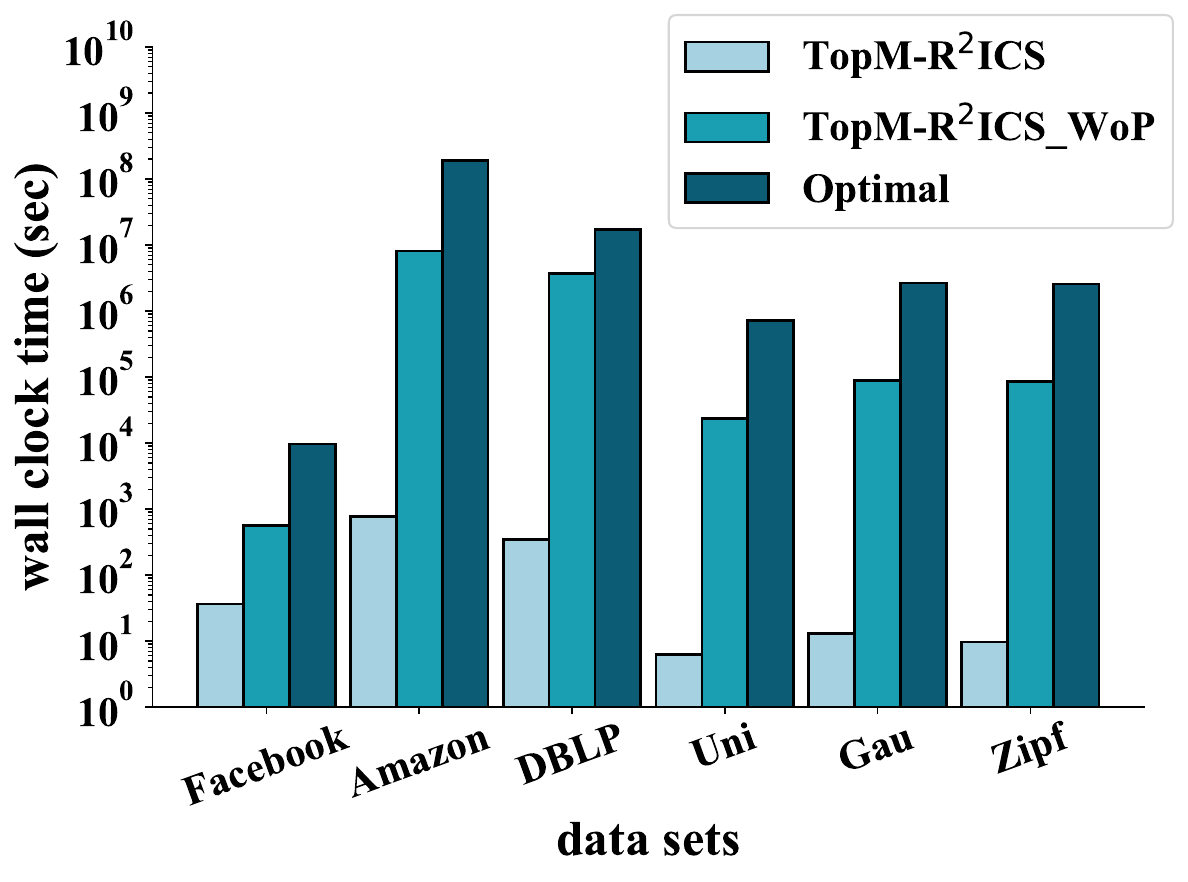}
        \label{fig:R^2ICS+rs}
    }\hspace{-0.3cm}
    \subfigure[query result size, $M$]{
        \includegraphics[height=3cm]{Figures/M.pdf}
        \label{fig:R^2ICS+M}
    }
    \vspace{-1.5ex}
    \caption{The Top\textit{M}-R$^2$ICS performance on real/synthetic graphs.}
    \label{fig:R^2ICS}
\end{figure}


\balance

\section{Related work}
\noindent\textbf{Community Search/Detection:} The \textit{community search} (CS) over social networks usually searches for connected subgraphs containing a given query vertex or a set of query vertices~\cite{fang2020survey, fang2020effective,sozio2010community}, whereas the \textit{community detection} (CD) detects all communities in social networks.
Prior works on CS \cite{cui2014local,bonchi2019distance,batagelj2003m,zhang2019unboundedness,cui2013online} adopted different cohesive subgraph models (e.g., $k$-core \cite{bonchi2019distance,batagelj2003m}, $k$-truss \cite{zhang2019unboundedness,cohen2008trusses}, and $k$-clique \cite{cui2013online,yuan2017index}) to find communities that contain a query vertex $q$.
Moreover, the foundation of many CD algorithms lies in the graph partitioning \cite{kernighan1970efficient, newman2013community} and clustering \cite{girvan2002community, clauset2004finding, blondel2008fast}, which focus on modularising dense structure parts in the data graph for the detection.
While most CS/CD studies considered retrieving communities whose own properties satisfy some constraints (e.g., query vertex, dense graph structure, and/or query keywords), our work focuses on reverse community search that obtains subgraphs most influencing a target community (instead of the entire graph).


\noindent\textbf{Influence Maximization:} The \textit{influence maximization} (IM) problem identifies a set of users as seed vertices with the maximum impact on other users within a given social network. Kempe et al. \cite{kempe2003maximizing} proposed two influence propagation models, \textit{Independent Cascade} (IC) and \textit{Linear Threshold} (LT) models, which have been widely used in the literature \cite{chen2010scalable,wang2012scalable,chen2015online,li2015real}. Chen et al. \cite{chen2010scalable} introduces the \textit{DegreeDiscount} heuristic algorithm for LT, presenting a scalable IM algorithm. \cite{wang2012scalable} proposed the PMIA heuristic algorithm for the IC model. However, such IM problems typically do not consider the constraints among seed vertices, whereas our Top\textit{M}-RICS problem identifies (densely connected) seed communities (containing query keywords) that can influence a target group.

\balance
\section{Conclusion}
\label{conclusion}
\textcolor{black}{
This paper proposed a novel Top\textit{M}-RICS problem, which returns top-\textit{M} seed communities with the maximum influences on a user-specified target community. Unlike existing works, the Top\textit{M}-RICS problem considers the influences of seed communities on a specific user group/community, rather than arbitrary users in social networks. To tackle the Top\textit{M}-RICS problem, we designed effective pruning strategies to filter out false alarms of candidate seed communities, and constructed an index to facilitate our proposed efficient Top\textit{M}-RICS query processing algorithm. 
We also formulated and tackled a variant of Top\textit{M}-RICS (i.e., Top\textit{M}-R$^2$ICS) with the relaxed structural constraints, by proposing effective pruning methods and an efficient query algorithm.
Extensive experiments on real/synthetic social networks validated the efficiency and effectiveness of our Top\textit{M}-RICS and Top\textit{M}-R$^2$ICS approaches.
}

\newpage

\bibliography{main}
\bibliographystyle{IEEEtran}

\end{document}